\newcommand\ce{\coloneq}
\newlength\tmp
\newcommand\twopart[2]{%
	\settowidth\tmp{#1}%
	#1\mbox{\parbox[t]{\linewidth - \tmp}{\raggedright{}#2}}%
}
\begin{document}

\newcommand\relatedversion{}

\title{\Large Selectable Heaps and Optimal Lazy Search Trees}
\author{Bryce Sandlund\thanks{David R. Cheriton School of Computer Science, University of Waterloo}
\and Lingyi Zhang\thanks{David R. Cheriton School of Computer Science, University of Waterloo}}

\date{}

\maketitle


\fancyfoot[R]{\scriptsize{Copyright \textcopyright\ 2022 by SIAM\\
Unauthorized reproduction of this article is prohibited}}





\begin{abstract} \small\baselineskip=9pt We show the $O(\log n)$ time extract minimum function of efficient priority queues can be generalized to the extraction of the $k$ smallest elements in $O(k \log(n/k))$ time\footnote{We define $\log(x)$ as $\max(\log_2(x), 1)$.}, which we prove optimal for comparison-based priority queues with $o(\log n)$ time insertion. We show heap-ordered tree selection (Kaplan et al., SOSA '19) can be applied on the heap-ordered trees of the classic Fibonacci heap and Brodal queue, in $O(k \log(n/k))$ amortized and worst-case time, respectively. We additionally show the deletion of $k$ elements or selection without extraction can be performed on both heaps, also in $O(k \log(n/k))$ time. Surprisingly, all operations are possible with no modifications to the original Fibonacci heap and Brodal queue data structures.

We then apply the result to lazy search trees (Sandlund \& Wild, FOCS '20), creating a new interval data structure based on selectable heaps. This gives optimal $O(B+n)$ time lazy search tree performance, lowering insertion complexity into a gap $\Delta_i$ from $O(\log(n/|\Delta_i|) + \log \log n)$ to $O(\log(n/|\Delta_i|))$ time. An $O(1)$ time merge operation is also made possible when used as a priority queue, among other situations. If Brodal queues are used, all runtimes of the lazy search tree can be made worst-case.\end{abstract}

\section{Introduction}
\label{sec:intro}
The first priority queue data structure, binary heaps, were invented in 1964 for the heapsort algorithm~\cite{Williams64}. Binary heaps support $O(\log n)$ time extract minimum and insert operations. Due to their simplicity and storage of elements in an array, binary heaps or their generalization to $d$-ary heaps~\cite{Johnson75,Tarjan83} continue to be one of the most practical priority queues. However, some priority queue applications require additional operations. The binomial heap was invented in 1978~\cite{Vuillemin78}, supporting insertion in $O(1)$ amortized time and the merge of two heaps also in $O(1)$ amortized time~\cite{Khoong93}, allowing for more efficient minimum spanning tree algorithms~\cite{Tarjan83}.

A breakthrough in efficient priority queue research came in 1984 with the development of Fibonacci heaps~\cite{Fredman87}. Fibonacci heaps generalize binomial heaps to support an efficient decrease-key operation, allowing for a faster implementation of Dijkstra's single-source shortest path algorithm~\cite{Dijkstra59}, among other applications~\cite{Fredman87}. Fibonacci heaps can perform insert, merge, and decrease-key all in $O(1)$ amortized time.
A number of priority queues with time bounds matching or close to Fibonacci heaps have since been developed~\cite{Fredman86,Chan09,Brodal12,Elmasry09,Haeupler11,Brodal96,Hansen15,Driscoll88,Hoyer95,Brodal95,Elmasry10,Kaplan99,Kaplan08,Sandlund20}. Many claim to be a simpler or more practical alternative to Fibonacci heaps~\cite{Chan09,Hansen15,Haeupler11,Elmasry10,Kaplan99,Kaplan08,Sandlund20}; however, Fibonacci heaps continue to be one of the most-taught, most-performant, and simplest-to-code priority queues with optimal theoretical efficiency~\cite{Hansen15,Larkin14}.

Brodal gave two priority queues matching Fibonacci heap bounds but in the worst-case. Brodal queues~\cite{Brodal96} were first presented in 1996, and the improved version strict Fibonacci heaps~\cite{Brodal12} (coauthored with Lagogiannis and Tarjan) were given in 2012. While Brodal queues make essential use of $O(1)$ time array access, strict Fibonacci heaps operate entirely on the pointer machine. More on the history and breadth of research on priority queues is given in the recent survey by Brodal~\cite{Brodal09}.

In this paper, we show the extract minimum function of priority queues, particularly, Fibonacci heaps~\cite{Fredman87} and Brodal queues~\cite{Brodal96}, can be generalized to the extraction of the $k$ smallest elements. We also show selection without extraction and deletion of multiple elements can be performed efficiently. Altogether, we consider the following set of heap operations, where we consider elements themselves to be keys (the generalization to key-value pairs is straightforward).
\begin{itemize}
	\item \texttt{MakeHeap()} $\ce$ Create a new, empty heap.
	\item \texttt{Merge($h_1$,\,$h_2$)} $\ce$ Return a new heap containing all elements of $h_1$ and $h_2$, destroying $h_1$ and $h_2$.
	\item \texttt{Insert($e$)} $\ce$ Add element $e$ to the heap.
	\item \texttt{DecreaseKey($e$,\,$v$)} $\ce$ Decrease the value of $e$ to $v$, with a pointer to $e$.
	\item \texttt{Delete($e$)} $\ce$ Delete $e$ from the heap, with a pointer to $e$.
	\item \texttt{FindMin()} $\ce$ Return the minimum element of the heap.
	\item \texttt{ExtractMin()} $\ce$ Return and remove the minimum element of the heap.
	\item \texttt{SelectK($k$)} $\ce$ Return the $k$ smallest elements of the heap, in no particular order.
	\item \texttt{ExtractK($k$)} $\ce$ Return and remove the $k$ smallest elements of the heap, in no particular order.
	\item \texttt{Delete($e_1, \ldots, e_k$)} $\ce$ Remove elements $e_1, \ldots, e_k$ from the heap, with a pointer to each.
\end{itemize}

\subsection{Results}
\sloppy
Recall that Fibonacci heaps support \texttt{MakeHeap()}, \texttt{Merge($h_1$,\,$h_2$)}, \texttt{Insert($e$)}, \texttt{DecreaseKey($e$,\,$v$)}, and \texttt{FindMin()} in $O(1)$ amortized time and \texttt{ExtractMin()} and \texttt{Delete($e$)} in $O(\log n)$ amortized time~\cite{Fredman87}. Via repeated application of \texttt{ExtractMin()} and \texttt{Delete($e$)}, Fibonacci heaps also support \texttt{ExtractK($k$)} and \texttt{Delete($e_1, \ldots, e_k$)} in $O(k \log n)$ time.
We show \texttt{FindMin()} and \texttt{ExtractMin()} operations can be more directly generalized to \texttt{SelectK($k$)} and \texttt{ExtractK($k$)}, respectively. We apply the heap-ordered tree selection algorithm of Kaplan, Kozma, Zamir, and Zwick~\cite{Kaplan18} to the heap-ordered trees of Fibonacci heaps, supporting \texttt{SelectK($k$)} and \texttt{ExtractK($k$)} in $O(k \log(n/k))$ amortized time. No modifications to the Fibonacci heap data structure itself are necessary.

We also show that for a comparison-based priority queue, if \texttt{Insert($e$)} is supported in $o(\log n)$ time then \texttt{ExtractK($k$)} must take $\Omega(k \log(n/k))$ time for some (smaller) values of $k$ and further that if \texttt{Insert($e$)} is supported in $O(1)$ time then \texttt{ExtractK($k$)} must take $\Omega(k \log(n/k))$ time for all values of $k$. This lower bound is not surprising since we need to pay for sorting the remaining elements after extracting the $k$ smallest, but we save the cost for sorting the $k$ elements removed from the set.

We then show Brodal queues~\cite{Brodal96} also support efficient selection without modification. Brodal queues match all time bounds of Fibonacci heaps for the considered set of standard priority queue operations but in the worst-case. We show heap-ordered tree selection~\cite{Kaplan18} can be made to work on a Brodal queue, giving \texttt{SelectK($k$)} and \texttt{ExtractK($k$)} in $O(k \log(n/k))$ worst-case time.
This provides a priority queue with optimal worst-case time operations for all standard operations while also supporting optimal worst-case time extraction of the $k$ smallest elements. Finally, we show both Fibonacci heaps and Brodal queues support \texttt{Delete($e_1, \ldots, e_k$)} in $O(k \log(n/k))$ amortized and worst-case time, respectively.

The exact conditions necessary for efficient selection in a heap are delicate. In order to have the selection algorithm \cite{Kaplan18} fast enough for $k$ selection in a heap-ordered tree, we want to limit the number of nodes with large degrees. This is because whenever a node is selected, we must consider all its children as next possible smallest nodes. Therefore we wish that the total degree of selected nodes be minimized.

One of the necessary conditions will be that the degree of each node must be bounded by the logarithm of its subtree size. Intuitively, if this property is satisfied and the descendants of selected nodes disjoint, we can apply Jensen's inequality to show the total degree of selected nodes is $O(k \log (n/k))$. Unfortunately, we do not have the guarantee that descendants of selected nodes will be disjoint. For example, a chain of nodes each with $O(\log n)$ child leaves could give a total degree sum of $\Omega(k \log n)$ if all selected nodes are along the chain.

Fibonacci heaps do not permit such structures because child subtrees must be of exponentially-increasing size, the key property we use in our proof. Brodal queues follow even more-rigorous structure constraints, and the technical difficulty of our proof for efficient selection is getting around the potentially $O(n)$ nodes throughout the heap that do not satisfy heap order. A careful analysis of invariants O4 and O5 of the Brodal queue structure can be used to show the number of such ``violating nodes" encountered during selection cannot be too many.

 Various existing priority queues seem roughly evenly-divided in their ability to also support efficient selection.
Specifically, it appears relaxed heaps~\cite{Driscoll88}, two-tier pruned binomial queues~\cite{Elmasry07}, hollow heaps~\cite{Hansen15}, and thin and fat heaps~\cite{Kaplan99} can also support selection in $O(k \log (n/k))$ time, while quake heaps~\cite{Chan09}, strict Fibonacci heaps~\cite{Brodal12}, pairing heaps~\cite{Fredman86}, rank-pairing heaps~\cite{Haeupler11}, and violation heaps~\cite{Elmasry10} do not.

It may seem surprising that existing efficient priority queues can support the selection of the $k$ smallest elements in optimal time without modification, despite being designed only for the extraction of the minimum element. The fact these classic heap data structures can support this behavior suggests they make few comparisons beyond the minimal required of the underlying partial order and keep this information efficiently accessible.

\subsection{Application to lazy search trees}

We develop selectable heaps for the application of an interval data structure for use in a lazy search tree~\cite{Sandlund20}. A lazy search tree is a comparison-based data structure that supports the operations of a binary search tree; this operation set is referred to as a \emph{sorted dictionary} in~\cite{Sandlund20}. (A formal description of allowed operations is given in Section~\ref{sec:lst}.) Instead of sorting elements on insertion, as does a binary search tree, lazy search trees store bags of unsorted elements in a partition into \emph{gaps} based on key order. Specifically, a set of gaps $\Delta_1, \ldots, \Delta_m$ are maintained such that for any $x \in \Delta_i$ and $y \in \Delta_{i+1}$, $x \leq y$. Inserted elements are placed into a gap respecting the key-order partition, and each query falls into a gap and \emph{splits} the gap into two new gaps at a position associated with the query operation.

Lazy search trees are able to provide superior runtimes to binary search trees on operation sequences with few queries or non-uniform query distribution. For example, $n$ insertions and $q$ queries can be served in $O(n \log q + q \log n)$ time (optimal), $q$ queries for $k$ consecutive keys with $n$ interspersed uniformly-distributed insertions can be served in $O(n \log q + qk \log n + n \log \log n)$ time (additive $O(n \log \log n)$ time from optimal), or the data structure can be used directly as a priority queue with $O(\log \log n)$ time insert and decrease-key operations ($O(1)$ time insertion and decrease-key is optimal)~\cite{Sandlund20}. More generally, if we take $B = \sum_{i=1}^m |\Delta_i| \log_2(n/|\Delta_i|)$, where $|\Delta_i|$ denotes the size of gap $\Delta_i$ at the end of operations, lazy search trees serve an operation sequence of $n$ insertions and $q$ distinct queries in $O(B + \min(n \log \log n, n \log q))$ time, where $\Omega(B+n)$ is a lower bound. Per-operation performance statements and additional applications are stated in Theorem 1 of~\cite{Sandlund20} and Section 1.2 of~\cite{Sandlund20}, respectively. Lazy search trees give a new model which makes direct comparison with previous work more challenging, therefore we refer the interested reader to consult Section 2 of~\cite{Sandlund20} for historical context.

The interval data structure given in the original lazy search tree paper can be made into a selectable heap with time complexities matching those stated herein for Fibonacci heaps and Brodal queues, except that \texttt{Insert()} and \texttt{DecreaseKey()} take $O(\log \log n)$ time and \texttt{SelectK($k$)} can be supported in $O(k)$ amortized time. (Although operations \texttt{FindMin()} and \texttt{SelectK($k$)} are not explicitly addressed in~\cite{Sandlund20}, in Section 1.2, ``Selectable Priority Queue", it is stated how they can be performed.) By building an interval data structure off Fibonacci heaps or Brodal queues, we show the following:
\begin{enumerate}
	\item Lazy search trees can achieve optimal $O(B+n)$ time performance over a sequence of $n$ insertions and $q$ distinct queries, reducing insertion complexity into gap $\Delta_i$ from $O(\min(\log(n/|\Delta_i|) + \log \log |\Delta_i|, \; \log q))$ to $O(\log(n/|\Delta_i|))$. This reduces the time complexity for $n$ uniformly-distributed insertions and $q$ interspersed queries for $k$ consecutive keys from $O(n \log q + qk \log n + n \log \log n)$ to $O(n \log q + qk \log n)$ and improves insertion and decrease-key complexity as a priority queue from $O(\log \log n)$ to $O(1)$ time. (This answers open problem 2 from~\cite{Sandlund20}.)
	\item Lazy search trees can be made to support $O(1)$ time merge (worst-case time using Brodal queues, amortized time using Fibonacci heaps)
	when used as a priority queue (supported operations are listed in Section~\ref{sec:intro}), among other situations. (This answers open problem 4 from~\cite{Sandlund20}.)
	\item Queries in a lazy search tree can be made worst-case in the general case of two-sided gaps\footnote{A \emph{two-sided gap} is a gap $\Delta_i$ such that queries have occurred for ranks on both the left and right boundary of $\Delta_i$.}. (This addresses open problem 5 from~\cite{Sandlund20}; a fully-general worst-case time solution does not appear possible while keeping change-key in the exact model given in~\cite{Sandlund20}. We do offer a worst-case time solution with fully-general gap merge and change-key supported as deletion and re-insertion in $O(\log n)$ time.)
\end{enumerate}

The proposed data structure makes fundamental use of soft heaps~\cite{Chazelle00,Kaplan13,Kaplan18,Brodal21}, biased search trees, and efficient priority queues. In some sense, this research approaches a unification of different ordered data structures into a single theory of a best data structure for ordered data. Optimal priority queues (a single gap $\Delta_1$), online dynamic multiple selection~\cite{BARBAY16} (gaps are separated by queried ranks), and binary search trees (every element is in its own gap) are special cases of our solution. This paper closes the book theoretically in the gap-based model proposed in~\cite{Sandlund20}; any further work in this research direction requires generalization of the gap model, stated in~\cite{Sandlund20} as open problem 1.

\subsection{Organization}

We organize the remainder of this paper as follows. In Section~\ref{sec:lb}, we show a priority queue supporting insertion in $o(\log n)$ time must take $\Omega(k \log (n/k))$ time to extract the $k$ smallest elements.
In Section~\ref{sec:select}, we describe the general framework of heap-ordered tree selection from~\cite{Kaplan18}. In Section~\ref{sec:fib}, we show how to support \texttt{SelectK($k$)}, \texttt{ExtractK($k$)}, and \texttt{Delete($e_1, \ldots, e_k$)} in a Fibonacci heap in $O(k \log (n/k))$ amortized time. In Section~\ref{sec:brodal}, we show how to support \texttt{SelectK($k$)}, \texttt{ExtractK($k$)}, and \texttt{Delete($e_1, \ldots, e_k$)} in a Brodal queue in $O(k \log (n/k))$ worst-case time. In Section~\ref{sec:lst}, we apply selectable heaps to lazy search trees, achieving optimal performance, supporting merge as a priority queue, and giving worst-case time operations for the general case of two-sided gaps. We give concluding remarks in Section~\ref{sec:conclude}.

\subsection{Corequisite reading}
This paper relies on the internals of Fibonacci heaps~\cite{Fredman87} and Brodal queues~\cite{Brodal96}. In the interest of being self-contained, we have given overviews of relevant information in this text; however, familiarity with these data structures is useful. Further, this paper relies on the model and background given in lazy search trees~\cite{Sandlund20}. While the entire interval data structure and analysis can be skipped (we replace it here with selectable heaps), familiarity with sections 1 and 2 of~\cite{Sandlund20} will help put the given improvement into context. Finally, this paper relies on the heap-ordered tree selection algorithm of Kaplan, Kozma, Zamir, and Zwick~\cite{Kaplan18}, however we give an overview of the algorithm here and further the algorithm can be treated as a black box.

\section{Lower bound}
\label{sec:lb}

In this section, we give a simple comparison-based lower bound for extraction of the $k$ smallest elements in a priority queue.

\begin{theorem}
\label{thm:lb}
Fix $k$. A priority queue supporting insertion and extraction of the $k$ smallest elements must perform $\Omega(\log(n/k))$ comparisons on insertion or $\Omega(k \log(n/k))$ comparisons on extraction of the $k$ smallest elements.
\end{theorem}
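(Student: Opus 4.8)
The plan is to run an information-theoretic (adversary-style) counting argument. Fix $n$ and $k$, and consider the set of operation sequences consisting of $n$ insertions followed by one extraction of the $k$ smallest elements. The output of the extraction is a subset of size $k$; since it is returned ``in no particular order,'' the data structure need not distinguish among the $k!$ orderings of the extracted elements nor among the $(n-k)!$ orderings of the remaining elements — but it must correctly identify \emph{which} $k$ elements are smallest, i.e. it must determine the partition of the $n$ elements into the bottom-$k$ block and the top-$(n-k)$ block. The number of such partitions is $\binom{n}{k}$, so any correct comparison-based algorithm must, in the worst case, make at least $\log_2 \binom{n}{k}$ comparisons \emph{total} across the $n$ insertions and the one extraction. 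Using the standard estimate $\log_2 \binom{n}{k} = \Theta\!\left(k \log(n/k)\right)$ (valid for $1 \le k \le n/2$; for larger $k$ one replaces $k$ by $n-k$, but the interesting regime is $k \le n/2$), we get that the total comparison count is $\Omega(k \log(n/k))$.

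Now I split this total budget between the two phases. Let $I$ be the worst-case number of comparisons performed during the $n$ insertions and let $E$ be the worst-case number performed during the extraction. The above gives $I + E = \Omega(k \log(n/k))$, hence at least one of the following holds: either $I = \Omega(k\log(n/k))$, or $E = \Omega(k \log(n/k))$. To convert the first alternative into the per-insertion statement $\Omega(\log(n/k))$ claimed in the theorem, I would observe that the hard instance can be taken to be one where the adversary forces the comparisons to be spread out: more carefully, consider feeding the elements in an order chosen adversarially so that after each insertion the structure has genuinely learned something; a clean way is to note that inserting the $n$ elements and then extracting is, combinatorially, at least as hard as the corresponding problem on any prefix, so if insertions collectively cost $o(\log(n/k))$ amortized each then insertions cost $o(k' \log(n/k'))$ on a suitable sub-scale and the extraction must make up the deficit. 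The cleanest route for the theorem \emph{as literally stated} — which only asks for an $\Omega(\log(n/k))$ \emph{per-insertion} bound OR an $\Omega(k\log(n/k))$ extraction bound, not both simultaneously — is: if \emph{every} insertion makes $o(\log(n/k))$ comparisons, then $I = o(n \log(n/k))$, which is not directly enough; so instead I would phrase the dichotomy at the level of a single insertion by a charging/averaging argument: run the adversary so that the $\binom{n}{k}$ worth of information is extracted as late as possible, forcing that if insertions are individually cheap the extraction alone must certify the size-$k$ subset and therefore pay $\log_2\binom{n}{k} - I = \Omega(k\log(n/k))$ once $I$ is small; and symmetrically, if we want to rule out cheap extraction, some insertion must be expensive. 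I expect the write-up to state it as: the sum of (max comparisons on a single insertion) $\cdot n$ ... — but actually the theorem's ``on insertion'' should be read as ``in total on the insertions'' relative to the counting bound, and then the per-operation corollary in the introduction ($o(\log n)$ insertion $\Rightarrow$ $\Omega(k\log(n/k))$ extraction for small $k$; $O(1)$ insertion $\Rightarrow$ the bound for all $k$) follows by choosing $n$ polynomially larger than $k$ so that $o(\log n)$ per insertion sums to $o(k \log(n/k) \cdot \mathrm{poly})$, forcing the extraction to absorb $\Omega(k\log(n/k))$.

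The main obstacle is the bookkeeping in that last splitting step: $\log_2\binom{n}{k}$ is a \emph{total} quantity, and the theorem wants to charge it cleanly to one phase or the other. The subtlety is that ``$\Omega(\log(n/k))$ comparisons on insertion'' must mean on \emph{some single} insertion in the worst case, and getting from a total lower bound to a single-operation lower bound requires either (i) a direct adversary that defers all learning to one carefully chosen insertion, or (ii) the softer statement that the amortized per-insertion cost is $\Omega(\tfrac{1}{n}\log\binom{n}{k})$ which is too weak — so (i) is the right path, and the technical work is designing that adversary (e.g., insert $n - 1$ ``neutral'' elements whose relative order the algorithm is allowed to leave unresolved, then the final insertion plus extraction must pin down the size-$k$ subset). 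I would also double-check the edge regime $k > n/2$ and the $\log(x) := \max(\log_2 x, 1)$ convention so the bound reads correctly when $n/k = O(1)$, and confirm $\binom{n}{k} \ge (n/k)^k$ to get the lower side of the $\Theta$ estimate without Stirling.
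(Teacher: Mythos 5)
Your counting argument gets a \emph{total} lower bound of $\log_2\binom{n}{k} = \Theta(k\log(n/k))$ for the sequence of $n$ insertions followed by \emph{one} extraction. As you yourself observe, this is not enough: splitting $I + E = \Omega(k\log(n/k))$ gives either $E = \Omega(k\log(n/k))$ or $I = \Omega(k\log(n/k))$, and the latter spread over $n$ insertions is only $\Omega((k/n)\log(n/k))$ per insertion — vanishingly weaker than the claimed $\Omega(\log(n/k))$ whenever $k = o(n)$. You flag this as ``the main obstacle'' and gesture at an adversary that defers the learning to one carefully chosen insertion, but that idea does not close the gap: the total information to be certified is only $\Theta(k\log(n/k))$ bits for your instance, so no adversary can force a \emph{single} insertion to pay $\Omega(\log(n/k))$ unless $k = \Omega(n)$; the bottleneck is that your instance simply doesn't contain enough information.

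The paper avoids this by using a \emph{different hard instance}: $n$ insertions followed by $n/k$ successive extractions of the $k$ smallest. This realizes a multiple-selection problem (find the $1$st, $(k{+}1)$th, $(2k{+}1)$th, $\dots$ order statistics), whose known lower bound is $\Omega(n\log(n/k))$ comparisons — a factor $n/k$ larger than your $\log\binom{n}{k}$. With that total, the split works: if every insertion uses $o(\log(n/k))$ comparisons, the $n$ insertions cost $o(n\log(n/k))$ in total, so the $n/k$ extractions must jointly cost $\Omega(n\log(n/k))$, i.e.\ $\Omega(k\log(n/k))$ per extraction. Your single-extraction instance cannot support this division of labor. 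To repair your proof, replace the one extraction by $n/k$ extractions and replace the $\binom{n}{k}$ count by the multiple-selection count $\Omega(n\log(n/k))$ (you can re-derive this yourself by counting the number of ways to partition $n$ elements into $n/k$ ordered blocks of size $k$, which is $n!/(k!)^{n/k}$, and taking logs); the rest of your dichotomy step then goes through as in the paper.
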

\begin{proof}
We can reduce a multiple selection instance~\cite{Dobkin81} to the selectable priority queue. Consider the multiple selection instance in which we are to find the smallest, the $(k+1)$th smallest, $(2k+1)$th smallest, and so on, elements amongst a set of $n$ elements. The lower bound for this multiple selection instance is $\Omega(n\log(n/k))$ comparisons~\cite{KMMS}. We can reduce this to a heap supporting insertion and extraction of the $k$ smallest elements by first inserting all $n$ elements and then repeatedly extracting the $k$ smallest. The smallest amongst each set of $k$ can be determined with a simple scan. This implies either insertion must take $\Omega(\log(n/k))$ comparisons or extraction of the $k$ smallest elements must take $\Omega(k \log(n/k))$ comparisons.
\end{proof}

\begin{corollary}
\label{cor:lb}
A priority queue supporting \texttt{Insert($e$)} in $f(n)$ comparisons must spend $\Omega(k \log(n/k))$ comparisons on \texttt{ExtractK($k$)} operations for all $k = o(n/2^{f(n)})$.
\end{corollary}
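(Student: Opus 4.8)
The plan is to obtain this as a quantitative refinement of Theorem~\ref{thm:lb}: I would re-run the reduction in that theorem's proof, but now tracking the per-insertion cost $f(n)$ explicitly and restricting $k$ to the stated range. The first step is the algebraic observation that $k = o(n/2^{f(n)})$ is exactly $2^{f(n)} = o(n/k)$, equivalently $f(n) \le \log(n/k) - \omega(1)$; this is precisely the condition that the insertion side of the dichotomy in Theorem~\ref{thm:lb} --- ``$\Omega(\log(n/k))$ comparisons on insertion'' --- cannot be the alternative that holds, so the extraction side must.

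Concretely: given any comparison-based priority queue with \texttt{Insert} cost $f(n)$, insert $n$ elements (total cost $\le n f(n)$) and then perform $n/k$ successive \texttt{ExtractK}$(k)$ operations, doing an $O(1)$-per-element scan after each extracted block to recover its minimum. As in Theorem~\ref{thm:lb}, this recovers the elements of ranks $1, k+1, 2k+1, \dots$, a multiple-selection instance \cite{Dobkin81} of comparison complexity $\Omega(n\log(n/k))$ \cite{KMMS}. Subtracting the $\le n f(n)$ insertion comparisons and the $O(n)$ scan comparisons, the $n/k$ \texttt{ExtractK}$(k)$ calls together make $\Omega(n\log(n/k)) - n f(n) - O(n) = \Omega(n\log(n/k))$ comparisons, using $n f(n) = o(n\log(n/k))$ from the first step. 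Averaging over the $n/k$ calls, \texttt{ExtractK}$(k)$ cannot run in $o(k\log(n/k))$ comparisons uniformly over heaps of size at most $n$, which is the claim.

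The step I expect to need the most care is the subtraction $\Omega(n\log(n/k)) - n f(n)$: it leaves an $\Omega(n\log(n/k))$ residue only if $n f(n)$ is bounded away from --- not merely below --- the multiple-selection lower bound. In the loosest part of the hypothesized range ($k$ near $n/2^{f(n)}$ with $f(n)$ of order $\log(n/k)$) this is tight, so I would either carry the explicit constant from the $\Omega(n\log(n/k))$ bound of \cite{KMMS} and verify $f(n)$ stays below a $(1-\varepsilon)$ fraction of it, or simply state the corollary in the (still very general) regime $f(n) = o(\log(n/k))$, where the residue is immediate and matches the intended applications (e.g.\ $f(n) = O(1)$ gives all $k = o(n)$). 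A minor point, handled by the averaging above, is that the \texttt{ExtractK} calls act on heaps of shrinking size $n, n-k, n-2k,\dots$, so the conclusion is naturally phrased for the worst case over heaps of size $\le n$.
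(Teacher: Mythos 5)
Your approach is essentially the paper's, unrolled: the paper's proof is a two-liner that appeals to Theorem~\ref{thm:lb} as a black box (``Solve $\Omega(\log(n/x)) = f(n)$ for $x$; then for $k = o(x)$ the extraction side of the dichotomy must hold''), whereas you re-run the multiple-selection reduction from scratch and carry the insertion cost $nf(n)$ explicitly. Both reach the same place.

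The genuinely valuable part of your write-up is the concern you raise in the last paragraph, and you should not wave it away: $k = o(n/2^{f(n)})$ is equivalent to $f(n) = \log(n/k) - \omega(1)$, which is strictly weaker than $f(n) = o(\log(n/k))$ --- for instance $f(n) = \log(n/k) - \log\log n$ with $\log(n/k) \approx \log n$ satisfies the hypothesis yet has $f(n)/\log(n/k) \to 1$. In that regime the subtraction $n\log(n/k) - nf(n)$ leaves only $n\cdot\omega(1)$, not $\Omega(n\log(n/k))$, so the per-call conclusion is $\omega(k)$, not $\Omega(k\log(n/k))$. The paper's proof does not confront this; its phrase ``solve $\Omega(\log(n/x)) = f(n)$'' quietly folds the hidden constant (which, from the KMMS bound $n\log_2(n/k) - O(n)$, is essentially $1$) into the definition of $x$, and it is precisely that fold that makes $k = o(x)$ a slightly different set than $k = o(n/2^{f(n)})$ at the boundary. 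Your two proposed fixes --- track the explicit constant and restate the range of $k$ as $k \le n/2^{(1+\varepsilon)f(n)}$, or restrict to $f(n) = o(\log(n/k))$ --- are both sound and honest; the second matches the way the corollary is actually invoked later in the paper (``$o(\log n)$ time insertion''), so nothing of consequence is lost. In short: you reproduced the intended argument and, in the process, correctly flagged a looseness at the boundary of the stated range that the paper's own proof elides.
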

\begin{proof}
Suppose insertion is supported in $f(n)$ time, which notably must be independent of $k$. Solve $\Omega(\log(n/x)) = f(n)$ for $x$. Then by Theorem~\ref{thm:lb}, \texttt{ExtractK($k$)} must take $\Omega(k \log(n/k))$ time for $k = o(x)$.
\end{proof}

Corollary~\ref{cor:lb} implies comparison-based priority queues with $o(\log n)$ time insertion must spend $\Omega(k \log(n/k))$ time on \texttt{ExtractK($k$)} operations for some (smaller) values of $k$. Further, priority queues with $O(1)$ time insertion must spend $\Omega(k \log(n/k))$ time on \texttt{ExtractK($k$)} operations for all values of $k$ (when $k = O(n)$, the size of output is $O(n)$ and so we must take $\Omega(n)$ time, and when $k = o(n)$ Corollary~\ref{cor:lb} implies \texttt{ExtractK($k$)} must take $\Omega(k \log(n/k))$ time).

\section{Heap-ordered tree selection}
\label{sec:select}

We use the heap-ordered tree selection algorithm \texttt{Soft-Select-Heapify($r$)} from Kaplan, Kozma, Zamir, and Zwick~\cite{Kaplan18}, where $r$ is the root of the tree. Algorithm \texttt{Soft-Select-Heapify($r$)} works on arbitrary heap-ordered trees in which nodes may have different degrees. We will use \texttt{Soft-Select-Heapify($r$)} as a black box. Given a heap-ordered tree with root $r$,  \texttt{Soft-Select-Heapify($r$)} will return a set of $k$ smallest items from the tree.
This is in contrast to a different algorithm by Frederickson~\cite{Frederickson93} which can be used to achieve the same result but is quite complicated.

We need the following definition from~\cite{Kaplan18}.

\begin{Definition}[\cite{Kaplan18}]
	
For any subtree, $S$ rooted at the root of $T$, let $\Delta(S)$ denote the sum of the degrees of the nodes of $S$ in the tree $T$. Then, define
\[
D(T,k):=\max(\Delta(S)) \text{ for all subtrees $S$ of size $k$ rooted at the root of $T$.}
\]
\end{Definition}

Kaplan et al.~\cite{Kaplan18} give the following examples. A large enough $d$-ary tree has $D(T, k) = dk$, as each node has degree $d$. As another example, consider a tree $T$ where each node at depth $i$ has degree $i+2$; then, $D(T,k) = \sum_{i=0}^{k-1} (i+2) = k(k+3)/2$, where the subtrees achieving this maximum are paths starting from the root. Their heap-ordered tree selection theorem is the following.

\begin{theorem}[\cite{Kaplan18}]
	\label{thm:selection}
	Let $T$ be a heap-ordered tree with root $r$. Algorithm \texttt{Soft-Select-Heapify($r$)} selects the set of $k$ smallest items in $T$ in $O(D(T,3k))$ time.
\end{theorem}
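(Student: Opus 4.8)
The plan is to realize \texttt{Soft-Select-Heapify($r$)} as a soft-heap exploration of $T$~\cite{Chazelle00,Kaplan13} that uses the heap order. Fix a sufficiently small constant error rate $\epsilon$ for the soft heap. Maintain a soft heap $Q$, initialized to contain just $r$, and a list $L$ of ``selected'' nodes, initially empty. Then repeat $3k$ times: extract an item $v$ of minimum current key from $Q$, append $v$ to $L$, and insert all children of $v$ (in $T$) into $Q$. If $Q$ empties before $3k$ extractions, then $T$ has fewer than $3k$ nodes and we simply return its $k$ smallest directly. Otherwise, at the end we run a linear-time exact selection on the true keys of the $3k$ nodes of $L$ and return the $k$ smallest among them. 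The key structural observation is that a node enters $Q$ only when its parent leaves $Q$, so $L$ is, at every moment, a connected subtree of $T$ rooted at $r$.

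The running time follows easily given this structure. The total number of insertions into $Q$ is $1 + \Delta(L) \le 1 + D(T, 3k)$, since $L$ is a subtree of size $3k$ rooted at $r$; the number of extractions is $3k$. With $\epsilon$ a constant, each soft-heap operation costs $O(1)$ amortized, so all heap operations together cost $O(D(T,3k))$. Since $\Delta(L)$ is at least the number of edges of $L$, namely $3k-1$, we have $D(T,3k) = \Omega(k)$, and the final exact selection on $3k$ elements costs $O(k) = O(D(T,3k))$. Hence the procedure runs in $O(D(T,3k))$ time.

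The substance of the proof is correctness: the set $X$ of the $k$ smallest items of $T$ (ties broken by a fixed rule, so $|X| = k$) must be contained in $L$. By heap order $X$ is ancestor-closed, hence itself a subtree rooted at $r$; let $v_k$ be its largest key. Suppose, for contradiction, that some element of $X$ is missing from $L$. Each missing $x \in X \setminus L$ admits a \emph{witness} $w(x)$ --- the first node off $L$ along the root-to-$x$ path (or $x$ itself if $x$'s parent lies in $L$) --- which is an ancestor of $x$ or $x$ itself, has key at most $v_k$ by heap order, was inserted into $Q$, and is still present in $Q$ at termination. I would then invoke the two defining guarantees of a soft heap: (i) each extraction returns an item whose \emph{current} key (at least its true key, since corruption only raises keys) is minimum among the items currently present; and (ii) at all times the number of corrupted items present is at most $\epsilon$ times the number of insertions performed so far. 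Guarantee (i) dispatches witnesses that are never corrupted: an uncorrupted witness $w$ forces every extraction after $w$'s insertion to return an item of true key at most $v_k$, which, together with the fact that the ancestors of $w$ along its root path were all extracted earlier and also have key at most $v_k$, yields by a counting argument more than $k$ distinct extracted items of key at most $v_k$ --- contradicting $|X| = k$, provided the number of extractions is taken to be a large enough constant multiple of $k$ (this is why we extract $3k$ rather than $k$ items). So every witness must be corrupted.

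Ruling out the possibility that \emph{all} witnesses are corrupted is the step I expect to be the main obstacle. The difficulty is that one corrupted witness $w$ may block many missing elements of $X$ simultaneously; however, this can happen only when $w$ has correspondingly many children in $X$, which both inflates $D(T,3k)$ --- so the extra exploration is affordable --- and, crucially, lets the corruption budget be amortized: by guarantee (ii) the soft heap corrupts at most an $\epsilon$ fraction of its insertions, and the insertions that matter are those producing the low-key items of $X$, which are forced into $Q$ early in the run. Formalizing this --- matching the soft-heap corruption budget against the few items of key at most $v_k$ while correctly accounting for shared witnesses --- is precisely where the structure of soft heaps (corruption proportional to the number of insertions, concentrated on heavily merged item-lists) is exploited, and is where the real work of the analysis in~\cite{Kaplan18} resides. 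Once $X \subseteq L$ is established, the final exact selection on $L$ returns exactly $X$, giving the theorem.
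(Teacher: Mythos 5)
Your proposed algorithm is not the algorithm \texttt{Soft-Select-Heapify} of Kaplan et al., and the version you describe is actually incorrect. The algorithm in~\cite{Kaplan18} does not merely insert the children of the \emph{extracted} node at each step: when \texttt{ExtractMin} reports a set $C$ of \emph{newly corrupted} items, the children of \emph{all} nodes in $C$ (together with those of the extracted node, if uncorrupted) are inserted into both the soft heap $Q$ and a working set $S$, and the final exact $k$-selection is performed on $S$, not on the list $L$ of extracted nodes. Your version omits both of these and therefore never explores the subtree below a node once it becomes corrupted, nor does it retain such a node as a selection candidate.

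This is not a gap that can be repaired in the analysis; it breaks correctness. Take $k\ge 3$ and a tree $T$ whose root $r$ has key $0$ and $3k+1$ children $a_1,\dots,a_{3k+1}$ with keys $1,2,\dots,3k+1$, where $a_1$ additionally has a single child $c$ with key $1.5$ and every other $a_i$ is a leaf. The $k$ smallest items are $\{r,a_1,c,a_2,\dots,a_{k-2}\}$. After the first extraction ($r$) the soft heap holds $3k+1$ items and, for any constant $\epsilon$, is permitted to corrupt $a_1$; if it does so and raises $a_1$'s key above $3k+1$, your algorithm extracts $r,a_2,a_3,\dots,a_{3k}$ over its $3k$ iterations, so $L$ contains neither $a_1$ nor $c$, and the final selection returns the wrong set. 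Even if you changed the final step to select from every node ever inserted into $Q$, you would still miss $c$, because $a_1$ is never extracted and in your algorithm only extracted nodes have their children inserted. You anticipate the difficulty in your last paragraph (``all witnesses are corrupted''), but the resolution in~\cite{Kaplan18} is algorithmic, not analytic: corrupted nodes must trigger exploration of their children precisely so that items hidden beneath a corruption survive into $S$. The rest of your account (the soft-heap cost accounting, the $\Delta(L)\ge 3k-1$ observation, and the ancestor-closure of the answer set) is fine, but without the corruption-triggered expansion the central claim $X\subseteq L$ (or $X\subseteq S$) does not hold.

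Two smaller points. First, the $O(1)$ amortized soft-heap operation cost and the bound of roughly $3k$ nodes whose children get inserted both rely on the specific choice $\epsilon=1/6$ and on counting corruptions against insertions; in the Kaplan et al.\ argument this is what bounds $|S|$ by $O(D(T,3k))$, which is where the running-time statement actually comes from — it is not immediate from ``$L$ has size $3k$.'' Second, the paper cites this theorem from~\cite{Kaplan18} and only sketches the algorithm; so the correct presentation here is closer to a restatement of that algorithm and its $D(T,3k)$ accounting than to a from-scratch correctness proof.
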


We give the high level idea of the algorithm \texttt{Soft-Select-Heapify($r$)} here. The algorithm  uses the soft heap~\cite{Chazelle00,Kaplan13,Kaplan18,Brodal21} for selection.  A soft heap is a variant on the simple heap data structure that allows ``corruption". A node is corrupted if the key is increased. The number of corrupted nodes in the tree can be controlled by a selected parameter $0 < \epsilon \leq 1/2$. More precisely, the guarantee offered by soft heaps is the following: for a fixed value $\epsilon$, at any point in time there will be at most $\epsilon n$ corrupted keys in the heap, where $n$ is the number of elements inserted across the lifetime of the heap (which can be more than the number of current elements).
Soft heaps can achieve constant amortized time \texttt{Delete}, \texttt{Insert}, \texttt{Merge}, \texttt{ExtractMin}, and \texttt{FindMin} operations. A deterministic linear time selection algorithm is one of the applications of soft heaps. 

Algorithm \texttt{Soft-Select-Heapify($r$)} starts by initializing an empty soft heap $Q$ and set $S$. The algorithm first adds the root $r$ of the input heap into the soft heap $Q$. The algorithm then performs $k-1$ iterations as follows. In each iteration, the algorithm first performs $(e,C):=\ $Q.\texttt{ExtractMin()}, where $e$ is the node with minimum key in $Q$ and $C$ is the set of nodes that are corrupted by the operation. If $e$ is not corrupted, it is added to $C$. Then the algorithm inserts all the children of nodes in $C$ into both $Q$ and $S$. After $k-1$ iterations, the algorithm performs a $k$ selection on set $S$ to find the $k$ smallest items required.

With a carefully chosen parameter $\epsilon=1/6$~\cite{Kaplan18}, we can limit the total number of corruptions of the above algorithm to $2k$. Thus, there will be at most $3k$ nodes whose children are inserted into the soft heap $Q$. We have the runtime of the algorithm to be  $O(D(T,3k))$.

\section{Fibonacci heap selection}
\label{sec:fib}

In this section we describe our algorithm and analysis for Fibonacci heap~\cite{Fredman87} selection. 
We will show how we apply the heap-ordered tree selection algorithm of Kaplan, Kozma, Zamir, and Zwick~\cite{Kaplan18} to expand \texttt{FindMin()} and \texttt{ExtractMin()} operations to \texttt{SelectK($k$)} and \texttt{ExtractK($k$)} and prove that \texttt{SelectK($k$)} and \texttt{ExtractK($k$)} operations can be done in $O(k \log(n/k))$ amortized time. 

Fibonacci heaps store a forest of heap-ordered trees and a pointer to the minimum root. \texttt{FindMin()} returns the minimum root that the pointer points to. \texttt{Insert()} adds a single new root with no children to the forest, possibly updating the pointer to the minimum root. \texttt{Merge()} combines two forests into one, again possibly updating the minimum pointer. \texttt{ExtractMin()} removes the smallest element, makes all its children new roots in the forest, finds the next smallest element, and then repeatedly combines roots of the same degree. \texttt{DecreaseKey()} is supported with a marking scheme. Each non-root node can either be marked or unmarked. If a node $x$ is marked, this implies $x$ has lost a child. When a marked node $x$ loses a second child, $x$ is removed from the list of children stored at its parent, $p(x)$, the subtree rooted at $x$ is made a new root, and $x$ is unmarked. The parent $p(x)$ is then either marked or also removed to form a new root, recursively. \texttt{DecreaseKey()} of a node $x$ simply makes $x$ a new unmarked root and processes its parent $p(x)$ via the marking scheme. \texttt{Delete()} calls \texttt{DecreaseKey()} on the element and decreases the key to $-\infty$. For more information on Fibonacci heaps, consult the original paper~\cite{Fredman87} or Wikipedia\footnote{\url{https://en.wikipedia.org/wiki/Fibonacci_heap}.}.

It is perhaps surprising that a tree $T$ of a Fibonacci heap satisfies $D(T,k) = O(k \log(n/k))$. The bound is easier to show on a binomial heap~\cite{Vuillemin78}, since a binomial tree of degree $k$ has exactly $2^{k}$ nodes. Fibonacci heaps have a more-flexible structure, in particular giving only \emph{lower bounds} on the degree of child nodes. We cannot find a fixed relation between the degree of the node and the size of the subtree rooted at such node. In order to bound the runtime of the selection algorithm, it will require bounding the total degree sum by showing that a large degree sum implies more than $n$ nodes in the Fibonacci heap. The fundamental property used is that children of a node have exponentially-increasing subtree size. In Fibonacci heaps, this is due to the following lemma.

\begin{lemma}[\cite{Fredman87}]
	\label{lem:child}
	Let $x$ be any node in a Fibonacci heap. Arrange the children of $x$ in the order they were linked to $x$, from earliest to latest. Then the $i$th child of $x$ has a degree of at least $i-2$.
\end{lemma}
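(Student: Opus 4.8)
The plan is to prove this by a standard structural induction argument on the history of link operations, exploiting the fact that in a Fibonacci heap two nodes are linked only when they are both roots of equal degree, together with the marking/cascading-cut rule which guarantees that a non-root node loses at most one child. First I would fix a node $x$ and consider its $i$th child $y$ (in link order, earliest first). I want to reconstruct the moment $y$ was linked to $x$: at that instant $x$ already had the children that are now numbered $1, \ldots, i-1$ as well as possibly others, so $x$ had degree at least $i-1$ at link time; since links occur only between roots of equal degree, $y$ also had degree at least $i-1$ at that moment. The key observation is then that after becoming a child of $x$, the node $y$ can lose at most one of its own children before itself being cut from $x$ (a node that loses a second child is itself cut via the cascading-cut rule, hence is no longer a child of $x$). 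Therefore $y$'s current degree is at least $(i-1) - 1 = i-2$, which is exactly the claim.

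The key steps, in order, are: (1) recall the two relevant structural invariants of Fibonacci heaps — that \texttt{ExtractMin}'s consolidation links two trees only when their roots have equal degree, and that the marking scheme forces any non-root that would lose a second child to be cut instead; (2) pin down the degree of $x$ at the time $y$ was linked to it, arguing that children $1, \ldots, i-1$ were present (since they were linked earlier), so $\deg(x) \ge i-1$ then; (3) conclude $\deg(y) \ge i-1$ at link time by the equal-degree linking rule; (4) observe monotonicity in the wrong direction is bounded — $y$ may only have lost a single child since, so its degree has dropped by at most $1$; (5) combine to get $\deg(y) \ge i-2$.

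The main obstacle — really the only subtle point — is step (2): being careful about the ordering and about children that were linked to $x$ and \emph{later cut}. One has to argue that among the children linked to $x$ strictly before $y$, at least $i-1$ of them survive as current children (those are precisely the ones we renumbered $1,\ldots,i-1$), so that $x$'s degree just before linking $y$ was genuinely $\ge i-1$; children that were linked before $y$ but subsequently cut only increase the count, never decrease it, so the bound is safe. A secondary point worth stating cleanly is why the cascading-cut rule really does prevent a child of $x$ from losing two children while remaining a child of $x$: the first loss marks the node, the second loss triggers its own removal from $x$ and makes it a root, so it is no longer counted in $x$'s child list. Since this lemma is quoted directly from Fredman and Tarjan~\cite{Fredman87}, I would present only this brief sketch rather than a full re-derivation, and then move on to applying it to bound $D(T,k)$ via the exponential-growth-of-subtree-sizes consequence (the Fibonacci-number lower bound $\operatorname{size}(x) \ge F_{\deg(x)+2}$) that the paper needs next.
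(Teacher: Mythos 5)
Your sketch is correct and is the standard argument from Fredman and Tarjan: children $1,\ldots,i-1$ were already present when $y$ was linked, so $x$ (and hence $y$, by equal-degree linking in consolidation) had degree at least $i-1$ at that moment, and the cascading-cut rule lets $y$ lose at most one child before being cut from $x$, giving $\deg(y)\ge i-2$. The paper does not re-prove this lemma — it is stated and cited directly from \cite{Fredman87} — and your decision to present only a brief sketch and move on matches the paper's treatment.
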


Lemma~\ref{lem:child} can be used to prove the following corollary.

\begin{corollary}[\cite{Fredman87}]
	\label{cor:ssize}
	A node of degree $k$ in a Fibonacci heap has at least $F_{k+2} \geq \phi^k$ descendants, including itself, where $F_k$ is the $k$th Fibonacci number and $\phi = (1 + \sqrt{5})/2$ is the golden ratio.
\end{corollary}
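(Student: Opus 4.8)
The plan is to prove the subtree-size bound by induction on the degree $k$, using Lemma~\ref{lem:child} to pin down the worst case. First I would define $S_k$ to be the minimum number of descendants (including itself) over all nodes of degree $k$ that can occur in any Fibonacci heap, and aim to show $S_k \ge F_{k+2}$; the estimate $F_{k+2} \ge \phi^k$ is then a separate, standard fact. The base cases are immediate: a degree-$0$ node has $S_0 = 1 = F_2$ descendants, and a degree-$1$ node has $S_1 = 2 = F_3$.

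For the inductive step, take a node $x$ of degree $k \ge 2$ and order its children $y_1, \ldots, y_k$ by the time they were linked to $x$. By Lemma~\ref{lem:child}, $\deg(y_i) \ge i-2$, and trivially $\deg(y_i) \ge 0$, so $\deg(y_i) \ge \max(i-2,0)$. Since the subtrees rooted at the $y_i$ are pairwise disjoint and each $y_i$ contributes at least $S_{\deg(y_i)} \ge S_{\max(i-2,0)}$ nodes, summing over the children and adding $x$ itself gives
\[
 S_k \ \ge\ 1 + S_0 + S_0 + \sum_{i=3}^{k} S_{i-2} \ =\ 2 + \sum_{j=0}^{k-2} S_j .
\]
Applying the inductive hypothesis $S_j \ge F_{j+2}$ for $j \le k-2$ together with the identity $\sum_{j=0}^{k-2} F_{j+2} = \sum_{i=2}^{k} F_i = F_{k+2} - 2$ (which follows from $\sum_{i=1}^{m} F_i = F_{m+2}-1$ and $F_1 = 1$), we obtain $S_k \ge 2 + (F_{k+2} - 2) = F_{k+2}$, closing the induction.

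To finish, I would run a short auxiliary induction for $F_{k+2} \ge \phi^k$: the cases $k=0$ ($F_2 = 1 = \phi^0$) and $k=1$ ($F_3 = 2 \ge \phi$) are direct, and for $k \ge 2$, $F_{k+2} = F_{k+1} + F_k \ge \phi^{k-1} + \phi^{k-2} = \phi^{k-2}(\phi+1) = \phi^{k-2}\phi^2 = \phi^k$, using $\phi^2 = \phi+1$. The only genuinely delicate point in the whole argument is the treatment of the first two children in the recurrence: Lemma~\ref{lem:child} gives nothing for $i=1$ and only $\deg(y_2) \ge 0$ for $i=2$, so one must fall back on the trivial bound $S_0 = 1$ in those two terms — and getting the resulting additive constant equal to $2$ is precisely what makes the Fibonacci recursion (rather than a strictly weaker one) emerge. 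That bookkeeping is the step I would be most careful with; the rest is routine.
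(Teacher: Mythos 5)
Your proof is correct and is precisely the classical Fredman--Tarjan argument that the paper is citing (the paper itself gives no proof, deferring to \cite{Fredman87}): define the worst-case subtree size $S_k$, use Lemma~\ref{lem:child} to lower-bound the degree of the $i$th-linked child by $\max(i-2,0)$, and recover the Fibonacci recurrence from the resulting sum, then bound $F_{k+2}$ below by $\phi^k$ via $\phi^2 = \phi + 1$. The bookkeeping you flag --- treating $i=1,2$ with the trivial $S_0 = 1$ bound and noting that the $+2$ is exactly what cancels against $F_{k+2}-2$ --- is indeed the only delicate point, and you handle it correctly.
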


We give the main theorem of the section below.

\begin{theorem}
\label{thm:fibselect}
\sloppy
	Fibonacci heaps support \texttt{SelectK($k$)}, \texttt{ExtractK($k$)}, and \texttt{Delete($e_1, \ldots, e_k$)} in $O(k \log (n/k))$ amortized time.
\end{theorem}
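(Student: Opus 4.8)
The plan is to reduce \texttt{SelectK($k$)}, \texttt{ExtractK($k$)}, and \texttt{Delete($e_1,\ldots,e_k$)} all to the heap-ordered tree selection algorithm of Theorem~\ref{thm:selection}, and then to bound $D(T,3k)$ for the relevant trees. The main content is the structural bound: for any tree $T$ in a Fibonacci heap of total size $n$, we want $D(T, k') = O(k' \log(n/k'))$. First I would handle the single-tree case. Fix a subtree $S$ of size $k'$ rooted at the root of $T$, and let the selected nodes be $v_1,\ldots,v_{k'}$. We must bound $\sum_i \deg_T(v_i)$. By Corollary~\ref{cor:ssize}, a node of degree $d$ has at least $\phi^d$ descendants; equivalently $\deg_T(v) \le \log_\phi(\mathrm{size}(v))$ where $\mathrm{size}(v)$ is the number of descendants of $v$ in $T$. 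If the subtrees rooted at the $v_i$ were pairwise disjoint, then $\sum \mathrm{size}(v_i) \le n$ and Jensen's inequality (concavity of $\log$) would immediately give $\sum \deg_T(v_i) \le k' \log_\phi(n/k') = O(k' \log(n/k'))$, as the excerpt's introduction already foreshadows. The obstacle, also flagged in the introduction, is that selected subtrees need \emph{not} be disjoint: $S$ is connected, so a whole root-to-node path may be selected, and then the sizes are deeply nested rather than disjoint.

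To get around this I would exploit the exponential-growth property more carefully, using Lemma~\ref{lem:child} rather than just Corollary~\ref{cor:ssize}. The key observation is that if a node $v$ in $S$ has $c$ of its children also in $S$, those $c$ children do not all have tiny subtrees: by Lemma~\ref{lem:child}, among any $c$ children of $v$, at least $c-2$ of them have degree $\ge 1$, at least $c-4$ have degree $\ge 2$, and so on — more usefully, the later-linked children have degree growing linearly, so their subtree sizes grow like $\phi^{i}$. Concretely, I would argue that the total number of descendants of $v$ that lie \emph{outside} $S$ but are "charged" to the edges of $S$ at $v$ is at least exponential in the number of children of $v$ in $S$; summing a geometric-type bound over the nodes of $S$ and comparing against $n$ gives $\Delta(S) = O(k' \log(n/k'))$. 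An alternative, cleaner route: do the standard accounting where each selected node $v_i$ is charged $\mathrm{size}(v_i)$ but then observe that since $S$ is a tree of size $k'$, the sum $\sum_i \mathrm{size}(v_i)$ telescopes/double-counts in a controlled way — each node of $T$ is counted once per selected ancestor, but selected nodes on a common path have geometrically increasing sizes (again by Corollary~\ref{cor:ssize} applied along the path, since degree strictly increases toward the root along a selected path when children are linked), so the sum over any root-leaf chain of $S$ is dominated by its largest term. I expect this "chains contribute geometrically, so the Jensen bound survives" argument to be the technical heart, and the place where the precise form of Lemma~\ref{lem:child} (not merely the golden-ratio corollary) is needed.

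Given the single-tree bound, the operations follow. For \texttt{SelectK($k$)} on a Fibonacci heap with forest of roots $r_1,\ldots,r_t$: create a virtual root $r^*$ whose children are the $r_j$ (or equivalently run \texttt{Soft-Select-Heapify} seeded with all roots at once, which the algorithm in Section~\ref{sec:select} accommodates since it starts by inserting a set into the soft heap), apply Theorem~\ref{thm:selection}, and return the $k$ smallest. The extra degree from $r^*$ is at most $t$, and standard Fibonacci-heap accounting bounds the number of roots; moreover any roots beyond the first few that get explored are themselves roots of large binomial-like trees, so the contribution is absorbed into $O(k\log(n/k))$ — I would fold this into the $D(T,\cdot)$ estimate by treating the forest-plus-virtual-root as one tree $T$ of size $n$. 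For \texttt{ExtractK($k$)}: run \texttt{SelectK($k$)}, delete the $k$ returned elements, and then re-consolidate. Deletions can splinter trees; the right way to pay for this is the Fibonacci-heap potential $\Phi = \#\text{roots} + 2\cdot\#\text{marked}$ — removing $k$ nodes and cutting their subtrees loose, then doing one consolidation pass, costs $O(k\log(n/k))$ amortized by the same argument that makes ordinary \texttt{ExtractMin} cost $O(\log n)$, scaled up (consolidating $O(k\log(n/k))$ many fragments of total size $\le n$ into $O(\log n)$ trees is $O(k\log(n/k))$ work against the potential drop). \texttt{Delete($e_1,\ldots,e_k$)} is the same minus the selection step: cut out the $k$ given nodes via the marking/cascading-cut machinery — cascading cuts are paid for by the $2\cdot\#\text{marked}$ term exactly as in the original analysis — promote the orphaned subtrees to roots, and run a single consolidation, for $O(k\log(n/k))$ amortized total. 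The only subtlety worth spelling out is verifying that $k$ interleaved deletions plus one consolidation is $O(k\log(n/k))$ rather than $O(k\log n)$; this is immediate because consolidation of a forest of total size $n$ into $O(\log n)$ trees does $O(\#\text{roots} + \log n)$ pointer work and here $\#\text{roots}$ is already $O(k\log(n/k))$ and dominates.
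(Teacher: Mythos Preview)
Your skeleton matches the paper's---virtual root, invoke \texttt{Soft-Select-Heapify}, bound $D(T,\cdot)$, amortize with the Fibonacci potential---but the heart of the argument, the bound $\Delta(S)=O(k\log(n/k))$, is not established by either of your two proposed routes. The first claim (``descendants of $v$ outside $S$ are at least exponential in the number of children of $v$ \emph{in} $S$'') is false: nothing stops all descendants of $v$ from lying in $S$, and the relevant count is children of $v$ \emph{outside} $S$, not inside. The second route rests on ``degree strictly increases toward the root along a selected path,'' which is also false in a Fibonacci heap: a root that has had many children cut via \texttt{DecreaseKey} can have degree~$1$ while its sole remaining child retains arbitrarily large degree, so sizes along a selected chain need not grow geometrically. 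The paper's resolution is simpler and sidesteps the nesting issue entirely. Split $\Delta(S)$ into (i) edges inside $S$, of which there are at most $|S|-1$, and (ii) edges from selected nodes to \emph{unselected} children. Since $S$ is a root-containing subtree, an unselected child has no selected descendant, so the subtrees hanging from unselected children are pairwise disjoint and together have at most $n$ nodes. Lemma~\ref{lem:child} with Corollary~\ref{cor:ssize} forces the $j$th such child at any selected node to root a subtree of size at least $\phi^{j-2}$; packing these disjoint subtrees into $n$ nodes, the per-node unselected-child count is maximized by an even allocation at $O(\log(n/k))$. That is the precise place Lemma~\ref{lem:child} is needed, but its role is to lower-bound the disjoint unselected subtrees, not to control chains inside $S$. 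The same decomposition (now reading ``unselected'' as ``has no selected descendant,'' which still yields disjoint subtrees, with the remaining children totalling at most $k-1$) shows the degree sum of \emph{any} $k$ nodes is $O(k\log(n/k))$; your \texttt{Delete} sketch silently assumes exactly this when asserting the number of orphaned subtrees is $O(k\log(n/k))$.

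One smaller gap concerns the root list in \texttt{SelectK}. ``Standard Fibonacci-heap accounting bounds the number of roots'' is not true: the root count $t$ can be $\Theta(n)$, so the virtual root alone forces $D(T,3k)\ge t$ and the selection takes $\Omega(t)$ actual time regardless of which roots get ``explored.'' The paper handles this by consolidating, which releases $t$ units of potential; the amortized cost is then $O(k\log(n/k))$ even though the actual cost is $O(t+k\log(n/k))$. You invoke this correctly for \texttt{ExtractK} but not for \texttt{SelectK}.
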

\begin{proof}
We first create a heap-ordered tree $T_{big}$ from the collection of roots stored in the Fibonacci heap by creating a dummy node $d$ with value $-\infty$ and linking all roots below it. We then perform \texttt{Soft-Select-Heapify($d$)}~\cite{Kaplan18} to select the $k+1$ smallest elements from $T_{big}$. We show $D(T_{big},k+1) = O(t + k \log(n/k))$, where $t$ is the number of roots the Fibonacci heap contains at the time of selection.

Consider the subtree $T_{select}$ of $T_{big}$ of selected nodes. Subtree $T_{select}$ contributes $O(k)$ to the degree of the selected nodes in tree $T_{big}$. Let us thus ignore this contribution and consider only unselected children of selected nodes. As there are a total $n+1$ nodes in $T_{big}$, we will maximize the sum of unselected children by attaching subtrees of smallest size to every selected node other than $d$, so that we may attach as many of them as possible. By Lemma~\ref{lem:child} and Corollary~\ref{cor:ssize}, the first two unselected subtrees we attach must be of degree at least $0$, containing $1$ element, then the third must be of degree $1$, containing $2$ elements, and the $j$th subtree must be of degree at least $j-2$, containing at least $\phi^{j-2}$ elements. The number of attached subtrees, ignoring the contribution of $T_{select}$, is maximized when we attach the same number $i$ to each selected node. Solving for $i$ in
\[
k + k \sum_{j=2}^i \phi^{j-2} \geq n-k,
\]
we can determine
\[
i \leq \frac{\log(1 + \frac{(\phi-1)(n-2k)}{k})}{\log \phi} + 1.
\]
It follows that $i \leq \log_\phi (n/k) + 1$. Adding back the contribution of $T_{select}$ affects the above analysis by no more than $k$, so that in total each selected node has degree $O(\log(n/k))$, besides dummy node $d$ which has degree $t$. Structural limitations imposed by the particular tree $T_{big}$ can only make average degree decrease by replacing smaller subtrees with larger subtrees. It thus follows that $D(T_{big},k) = O(t + k \log(n/k))$ and by Theorem~\ref{thm:selection}, selection in $T_{big}$ takes $O(t + k \log(n/k))$ time.

To complete operation \texttt{SelectK($k$)}, we reduce $t$ to at most $\log n$ by repeatedly combining roots of equal degree. By the potential function of the Fibonacci heap~\cite{Fredman87}, this releases $t$ units of potential and achieves amortized $O(k \log(n/k))$ time selection. To complete operation \texttt{ExtractK($k$)}, we first remove the $k$ smallest elements. This leaves $O(t + k \log(n/k))$ independent trees. We then again repeatedly combine roots of equal degree, resulting in at most $\log n$ independent trees. The amortized time complexity of \texttt{ExtractK($k$)} is thus also $O(k \log(n/k))$.

The above degree bounds did not require the subtree $T_{select}$ be minimal, or even connected. Our argument shows the total sum of degrees of any $k$ nodes in a Fibonacci heap is $O(k \log(n/k))$. This allows an $O(k \log(n/k))$ time \texttt{Delete($e_1, \ldots, e_k$)} operation. For each $i$, we remove the subtree rooted at $e_i$ and perform cascading cuts until an ancestor of $e_i$ is not marked or a root is reached, as in the decrease-key operation. The final unmarked ancestor, if not a root, is then marked and all children of $e_i$ are added to the collection of roots. The total number of children is $O(k \log(n/k))$ and each cascading cut operation takes $O(1)$ amortized time per $e_i$. Thus the entire deletion can be performed in $O(k \log(n/k))$ amortized time.
\end{proof}

\section{Brodal queue selection}
\label{sec:brodal}

In this section we describe our algorithm and analysis for Brodal queue~\cite{Brodal96} selection. This section will assume familiarity with Brodal queues, but we give an overview of needed concepts for readers who are not already familiar in Appendix~\ref{sec:boverview}. For more understanding, we suggest reading the original Brodal queue paper~\cite{Brodal96}.

Brodal queue~\cite{Brodal96} selection will differ from Fibonacci heap selection in a couple ways. The most important is handling violating nodes; that is, the potentially $O(n)$ nodes that do not satisfy heap-order. We treat violating nodes with on-the-fly conversion into a proper heap-ordered tree. We apply invariants O4 and O5 to show there cannot be too many violating nodes encountered throughout the selection.
The second is that the rigid rank structure of Brodal queues can actually allow us to simplify the proof of degree bounds compared to the approach taken in the previous section. Namely, rather than lower bounding the sizes of unselected subtrees, we can directly upper bound the number of nodes of high degree. We first give a lemma from~\cite{Brodal96}\footnote{While this is not an explicit lemma in~\cite{Brodal96}, it is explicitly stated and proven on page 2.}.

\begin{lemma}[Brodal~\cite{Brodal96}]
	\label{lem:brodal-node}
	A node $x$ with rank $r(x)$ in a Brodal queue has subtree of size at least $2^{r(x)+1}-1$.
\end{lemma}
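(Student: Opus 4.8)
The plan is to prove Lemma~\ref{lem:brodal-node} by induction on the rank $r(x)$, exploiting the structural invariants that Brodal queues place on the multiset of child ranks. First I would recall the relevant invariant: in a Brodal queue, a node $x$ of rank $r$ has children whose ranks, when sorted, satisfy a lower bound of the form ``the $i$th largest child rank is at least $r-2i$'' (the precise statement is given among the tree-structure invariants T1--T3 in the original paper and is recalled in Appendix~\ref{sec:boverview}); in particular $x$ has at least one child of rank $r-1$, at least two children of rank $\ge r-2$, and so on. The base case $r(x)=0$ is immediate since a single node has subtree size $1 = 2^{0+1}-1$.

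For the inductive step, assume every node of rank $<r$ has subtree size at least $2^{\text{rank}+1}-1$, and let $x$ have rank $r$. The idea is to pick out a nested family of children realizing the rank lower bound: a child $c_1$ of rank $r-1$, children $c_2$ of rank $r-2$, then $c_3, c_4$ of rank $r-3$ — more precisely, to mirror the Fibonacci-heap-style argument, I would select children forcing the subtree sizes $2^{r}-1, 2^{r-1}-1, 2^{r-2}-1, \ldots, 2^1-1$ (via the invariant guaranteeing a child of each rank $r-1, r-2, \ldots, 0$, or a doubling pattern if the invariant is phrased with the ``two children of rank $\ge r-2$'' form). Summing the induction hypothesis over these disjoint subtrees plus the node $x$ itself gives
\[
|x| \;\ge\; 1 + \sum_{j=0}^{r-1}\bigl(2^{j+1}-1\bigr) \;=\; 1 + (2^{r+1}-2) - r \;=\; 2^{r+1}-1-r,
\]
which is a hair short of the claimed $2^{r+1}-1$. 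So the actual bookkeeping has to be done with the correct Brodal invariant — which gives slightly more children at the low ranks (e.g.\ two children of rank $\ge 0$, or the pair structure) — so that the ``$-r$'' slack is absorbed; this is exactly the computation carried out on page 2 of~\cite{Brodal96}, and I would follow that counting precisely rather than the loose version above.

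The main obstacle is getting the child-rank invariant stated in exactly the right form and checking it is preserved under the linking/violation-reduction operations that Brodal queues perform — but since this lemma is quoted from~\cite{Brodal96} and we are permitted to cite it, the cleanest route is simply to reproduce Brodal's short argument: fix the invariant, induct on rank, and sum a geometric series over a carefully chosen nested chain of children, verifying the off-by-$r$ correction closes using the extra low-rank children the invariant guarantees. I would also note that nothing here uses heap order, only the rank/degree structure, which is why the bound holds even for the violating nodes that Section~\ref{sec:brodal} must contend with.
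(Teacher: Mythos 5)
The paper does not give its own proof of this lemma; it cites Brodal's paper and restates the needed invariants (S1--S5) in Appendix~\ref{sec:boverview}. So the question is whether your induction is a correct reconstruction, and it is not quite: you have misremembered the key invariant, and the version you use does not close.

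The invariant you want is S3 (stated in the appendix of this very paper): if $r(x) > 0$ then $n_{r(x)-1}(x) \geq 2$, i.e.\ $x$ has \emph{at least two children of rank exactly $r(x)-1$}. That is a genuine doubling condition, and with it the induction is one line: base case $r(x)=0$ gives subtree size $1 = 2^1-1$; for $r(x)=r>0$, the two rank-$(r-1)$ children each have subtree size at least $2^r-1$ by induction, so
\[
|x| \;\ge\; 1 + 2\bigl(2^{r}-1\bigr) \;=\; 2^{r+1}-1,
\]
exactly the claim. Instead you posit an invariant of the form ``the $i$th largest child rank is at least $r-2i$'' or ``one child of each rank $r-1,\dots,0$,'' and your own calculation with that weaker hypothesis lands at $2^{r+1}-1-r$, which you correctly flag as short. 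But you then stop, gesturing at ``extra low-rank children'' and deferring to Brodal's page~2 rather than identifying the invariant that actually makes the slack vanish. The chain-of-children picture you borrow from Fibonacci heaps is the wrong mental model here; Brodal's structure gives branching (two children one rank down), not a chain, and that branching is precisely what produces $2^{r+1}-1$, the size of a complete binary tree of height $r$. Your final observation --- that the bound uses only the rank structure and not heap order, so it applies to violating nodes too --- is correct and worth keeping.
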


Lemma~\ref{lem:brodal-node} implies the maximum rank $r$ in a Brodal queue is at most $\log_2(n)$. 

We can use S2 and Lemma~\ref{lem:brodal-node} to get a bound on the number of nodes in a Brodal queue of a particular rank.

\begin{lemma}
	\label{lem:brodal-ranks}
	A Brodal queue has at most one node of rank $\lceil\log_2(n)\rceil$, two nodes of rank $\lceil\log_2(n)\rceil-1$, and in general at most $2^i$ nodes of rank $\lceil\log_2(n)\rceil-i$, where $i \leq \lceil\log_2(n)\rceil$.
\end{lemma}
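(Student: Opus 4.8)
The plan is to count, for each rank $r$, the number $n_r$ of nodes of rank $r$ by a disjoint-subtree packing argument, exactly as one does for binomial heaps. The only ingredients needed are Lemma~\ref{lem:brodal-node} (a rank-$r$ node owns a subtree of at least $2^{r+1}-1 \geq 2^r$ nodes) and invariant S2, which I will use in the form that the rank of any node strictly exceeds the rank of each of its children.

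First I would establish the key claim: the subtrees rooted at two distinct rank-$r$ nodes are disjoint. Two subtrees intersect only if one node is an ancestor of the other; but walking down any root-to-node path, S2 forces ranks to strictly decrease, so a proper ancestor of a rank-$r$ node has rank $> r$ and a proper descendant has rank $< r$. Hence no rank-$r$ node lies in the subtree of another, and since nodes in different trees of the queue are trivially in disjoint subtrees, all $n_r$ rank-$r$ subtrees are pairwise disjoint.

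Next comes the counting. The $n_r$ disjoint subtrees together contain at most $n$ nodes, and each has at least $2^r$ nodes by Lemma~\ref{lem:brodal-node}, so $n_r \cdot 2^r \leq n \leq 2^{\lceil \log_2 n\rceil}$, giving $n_r \leq 2^{\lceil \log_2 n\rceil - r}$. Writing $i := \lceil \log_2 n\rceil - r$, which ranges over $0 \leq i \leq \lceil \log_2 n\rceil$ since every rank lies in $[0, \lceil\log_2 n\rceil]$ (the remark following Lemma~\ref{lem:brodal-node}), this says there are at most $2^i$ nodes of rank $\lceil\log_2 n\rceil - i$; in particular at most one of the maximum rank, at most two of the next, and so on. Edge cases (the maximum rank is $0$ when $n=1$, or all ranks strictly below $\lceil\log_2 n\rceil$ when $n$ is not a power of two) require no separate treatment since $n_r \leq 2^{\lceil\log_2 n\rceil-r}$ holds for every $r \geq 0$.

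The argument is short, so the only place needing care is the precise reading of S2. If S2 is phrased as a bound on the number of sons of each rank rather than directly as "child rank $<$ parent rank," I would instead cite whichever structural invariant of Brodal queue trees yields strictly decreasing ranks along downward paths (binomial-style tree structure guarantees this), since that monotonicity is all the packing argument uses.
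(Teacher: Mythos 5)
Your proposal is correct and takes essentially the same approach as the paper: both use invariant S2 to conclude that distinct nodes of the same rank have pairwise disjoint subtrees, and both combine this with Lemma~\ref{lem:brodal-node} in a packing argument to bound $n_r \leq 2^{\lceil\log_2 n\rceil - r}$. Your intermediate inequality ($n_r \cdot 2^r \leq n \leq 2^{\lceil\log_2 n\rceil}$) is in fact a little cleaner than the paper's ($n/2^{i-1}-1 \geq n/2^i$, which is not quite valid at the boundary $i = \lceil\log_2 n\rceil$ when $n$ is not a power of two), but the argument is the same.
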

\begin{proof}
	By S2, nodes of the same rank cannot be descendants of each other. This implies their subtrees are disjoint. By Lemma~\ref{lem:brodal-node}, a node of rank $\lceil\log_2(n)\rceil-i$ has at least $n/2^{i-1}-1 \geq n/2^i$ descendants. Thus there can only be $2^i$ such nodes.
\end{proof}

\begin{theorem}
\label{thm:brodselect}
\sloppy
	Brodal queues support \texttt{SelectK($k$)}, \texttt{ExtractK($k$)}, and \texttt{Delete($e_1, \ldots, e_k$)} in $O(k \log (n/k))$ worst-case time.
\end{theorem}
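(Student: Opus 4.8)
The plan is to mirror the structure of the Fibonacci heap proof (Theorem~\ref{thm:fibselect}), but with two essential modifications forced by the Brodal queue structure. First, I would form a single heap-ordered tree $T_{big}$ by hanging all the trees of the Brodal queue under a dummy root $d$ with value $-\infty$, and run \texttt{Soft-Select-Heapify($d$)} to pull out the $k+1$ smallest elements, invoking Theorem~\ref{thm:selection} once I have bounded $D(T_{big}, 3k)$. The key degree estimate here is \emph{easier} than in the Fibonacci case: rather than lower-bounding the sizes of unselected subtrees, I can directly count high-degree (equivalently, high-rank) nodes using Lemma~\ref{lem:brodal-ranks}. Summing $2^i$ nodes of rank $\lceil\log_2 n\rceil - i$ weighted by that rank, the contribution of all nodes of rank $\geq \lceil\log_2(n/k)\rceil$ is $O(k\log(n/k))$ by a geometric-series argument, while any selected node of rank below $\lceil\log_2(n/k)\rceil$ contributes at most $O(\log(n/k))$ to the degree sum; over $O(k)$ selected nodes this is again $O(k\log(n/k))$. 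The dummy $d$ contributes its degree $t$, the number of trees in the queue, which by the Brodal queue invariants is $O(\log n) = O(k\log(n/k))$ — so, unlike the Fibonacci case, no cleanup of the root list is needed and everything is worst-case.

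The second, and genuinely harder, modification is handling \emph{violating nodes}: Brodal queues maintain $\Theta(1)$ violating pointers per node but can harbor $\Theta(n)$ nodes whose key exceeds their parent's, so $T_{big}$ as literally defined is not heap-ordered and Theorem~\ref{thm:selection} does not apply verbatim. The plan is on-the-fly conversion: whenever the selection algorithm would descend to the children of a just-selected node $x$, I first locally reorganize around any violating children using the standard Brodal queue violation-reducing transformations, so that what \texttt{Soft-Select-Heapify} actually explores is a bona fide heap-ordered tree. I then need to argue that the extra nodes touched by this process do not blow up the running time. This is where invariants O4 and O5 come in: O4/O5 bound the number of violating nodes as a function of the ranks present (roughly, at most $O(1)$ violating children per rank plus a bounded total), so the violating nodes encountered while exploring the first $O(k)$ selected nodes and their neighborhoods is $O(k\log(n/k))$, absorbed into the existing bound. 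I expect this bookkeeping — precisely stating the conversion step, and showing each unit of work is charged either to a selected node's degree or to a violating node accounted for by O4/O5 — to be the main obstacle.

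Given \texttt{SelectK($k$)}, the remaining two operations follow as in Theorem~\ref{thm:fibselect}. For \texttt{ExtractK($k$)}, remove the $k$ smallest elements, leaving $O(k\log(n/k))$ orphaned subtrees, then rebuild the Brodal queue structure over them; since Brodal queues support insertion and the requisite linking in $O(1)$ worst-case time, this costs $O(k\log(n/k))$ worst-case. For \texttt{Delete($e_1,\dots,e_k$)}, note the degree argument above in fact bounds the total degree of \emph{any} $k$ nodes by $O(k\log(n/k))$, so cutting the subtrees rooted at the $e_i$, promoting their $O(k\log(n/k))$ children, and performing the violation/rank repairs that Brodal queue deletion already uses, again totals $O(k\log(n/k))$ worst-case. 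Throughout, the use of Brodal queue operations as a black box for the repair steps is what keeps all bounds worst-case rather than amortized.
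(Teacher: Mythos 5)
Your degree bound for proper children via Lemma~\ref{lem:brodal-ranks} is essentially the paper's argument, and the skeleton for \texttt{ExtractK} and \texttt{Delete} is sound. The gap is in the violating-node handling, which you correctly flag as the main obstacle but do not resolve. You propose to ``locally reorganize around any violating children using the standard Brodal queue violation-reducing transformations'' during exploration. The paper does no such reorganization, and it is not clear yours could be made to work: Brodal's violation transformations move violations toward $t_1$ and restore invariants within a delete-min, they are not local repairs around an arbitrary node being explored, and their cost is charged against a whole priority-queue operation rather than against the $O(k)$ nodes the selection touches. The paper's actual idea is to construct a \emph{virtual} heap-ordered tree on the fly without touching the data structure: when the selection algorithm reaches a node $y$, it simply declares the nodes of $V(y)$ and $W(y)$ to be additional children of $y$. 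Invariant O2 gives $x\ge y$ for every $x\in V(y)\cup W(y)$, so this preserves heap order, and O3 guarantees each violating node lives in exactly one such set and so has a well-defined virtual parent; a visited flag absorbs the $O(1)$ overhead from a node being reachable both via its proper parent and via a violating set. The cost of these extra virtual children is then bounded exactly where you pointed: high-rank ones ($\ge \log_2(n/k)$) number $O(k)$ by Lemma~\ref{lem:brodal-ranks}, and low-rank ones are at most $(6+\alpha)\log_2(n/k)$ per selected node by O4 and O5. That virtual-children-via-O2 observation is the piece you were missing.

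A minor point: a Brodal queue has only two trees $T_1$ and $T_2$, not a Fibonacci-style root list. The paper first empties $T_2$ into $T_1$ (as in delete-min) and calls \texttt{Soft-Select-Heapify($t_1$)} directly; your dummy root would have degree $2$, not $O(\log n)$, and is unnecessary.
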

\begin{proof}
	As in the Brodal queue delete minimum operation, we first empty tree $T_2$ by moving all children of $t_2$ to $T_1$ and making $t_2$ a rank $0$ child of $T_1$. We then call \texttt{Soft-Select-Heapify($t_1$)} to select the $k$ smallest nodes in $T_1$.
	
	Whenever we reach a node $y$, we consider nodes of $V(y)$ and $W(y)$ to be children of $y$ in the selection algorithm. From invariant O2, we know that $x\geq y$. It is thus possible to reach a node $x$ via its proper parent or a node $y$ in which $x \in V(y) \cup W(y)$. Upon encountering $x$, we check to make sure it was not already selected. Since $x$ is in the violating set $V(y)$ or $W(y)$ of at most one node $y$, the total extra work for encountering $x$ on two paths is $O(1)$, totaling $O(k)$ for all $k$ selected nodes. Algorithm \texttt{Soft-Select-Heapify()} can process the selection by considering the heap to be constructed in this way since the tree $T$ we construct on the fly still remains heap-ordered.
	
	We now consider the cost of the selection. We must show $D(T, k) = O(k \log(n/k))$, where $T$ is the heap-ordered tree we construct on the fly. Observe that the degree of a node $x$ in $T$ is the sum of the number of violating nodes in its sets $V(x)$ and $W(x)$ and the number of its proper children. Observe that the number of proper children of a node $x$ is bounded by $7r(x)$. Thus to bound the number of proper children of selected nodes, it suffices to find an upper bound on the total sum of ranks of selected nodes.
	
	Let the ranks of the $k$ selected nodes in non-increasing order be $r_1, \ldots, r_k$. Then by Lemma~\ref{lem:brodal-ranks}, $r_1 \leq \lceil\log_2(n)\rceil$, $r_2, r_3 \leq \lceil\log_2(n)\rceil-1$, $r_4, \ldots, r_7 \leq \lceil\log_2(n)\rceil-2$, and $r_k \leq \lceil\log_2(n)\rceil - \lfloor\log_2(k)\rfloor \leq \log_2(n/k)+2$. The total sum of ranks is thus at most
	\[
	2k+\sum_{i=0}^\infty \frac{k}{2^i} (\log_2(n/k) + i) = k\log_2(n/k) + 4k.
	\]
	It follows the total sum of proper children of selected nodes is $O(k\log(n/k))$.
	
	We can apply a similar strategy to bound the total number of nodes in violating sets $V$ and $W$ of selected nodes. Divide the nodes into two categories: those with rank greater than or equal to $\log_2(n)-\log_2(k)$ and those with rank less than $\log_2(n)-\log_2(k)$. By Lemma~\ref{lem:brodal-ranks} again and by employing the above argument, there are at most $O(k)$ nodes in the first category.
	
	We can bound the number of nodes in the second category with invariants O4 and O5. Invariant O4 states that the $W$ sets of selected nodes contain at most $6$ nodes per rank. Invariant O5 states that $V$ sets of selected nodes contain at most $m\alpha$ nodes of rank less than $m$. Together they imply a bound of $(6+\alpha)\log_2(n/k)$ nodes in total in the second category. As $\alpha = O(1)$, this implies the total number of nodes in violating sets of selected nodes is $O(k\log(n/k))$. Since the number of proper children of selected nodes is also $O(k\log(n/k))$, we have $D(T,k) = O(k \log(n/k))$. By Theorem~\ref{thm:selection}, this implies the heap-ordered tree selection takes $O(k \log(n/k))$ worst-case time.
	
	This shows \texttt{SelectK($k$)} takes $O(k \log(n/k))$ worst-case time on a Brodal queue. To prove $O(k \log(n/k))$ worst-case runtime for \texttt{ExtractK($k$)}, we must rebuild the Brodal queue with the $k$ smallest nodes removed. We can do so as follows. We remove nodes other than $t_1$ one-by-one. Consider the process for a node $x$. First, we remove $x$. This may cause $p(x)$ to have only one child of rank $r(x)$, violating S4. If $r(p(x)) > r(x)+1$, we can remove the other rank $r(x)$ node from $p(x)$ and make it a child of $t_1$. Otherwise, we can find a rank $r(x)$ child of the root and replace $x$ with it. This may create a violation; if so, we can add it to $W(t_1)$. We then add the proper children of $x$ below $t_1$. Further, we must deal with $V(x)$ and $W(x)$. We can simply add them to $W(t_1)$.
	
	We can bound the total cost of adding children of removed nodes below $t_1$ as the above analysis gives for the bound on the number of proper children, at $O(k \log(n/k))$. Similarly, we can bound the number of violations created at $O(k)$ and the number of violations added to $W(t_1)$ at $O(k \log(n/k))$, again following the above analysis. At this point we can then remove $t_1$, following the extract minimum procedure stated in \texttt{DeleteMin($Q$)} of~\cite{Brodal96}. The total time taken to remove the $k$ smallest elements is $O(k \log(n/k))$.
	
	As in the proof of Theorem~\ref{thm:fibselect}, we again do not need the property that the $k$ nodes removed in \texttt{ExtractK($k$)} are minimal. We can thus support \texttt{Delete($e_1, \ldots, e_k$)} as follows. We again first empty tree $T_2$ by moving all children of $t_2$ to $T_1$ and making $t_2$ a rank $0$ child of $T_1$. If any $e_i = t_1$, we remove each node one-by-one as in \texttt{ExtractK($k$)}, but save $t_1$ for the final removal as in \texttt{DeleteMin($Q$)} of~\cite{Brodal96}. Otherwise, we remove each node $e_i$ one-by-one exactly as in the above procedure for \texttt{ExtractK($k$)}, skipping the final \texttt{DeleteMin($Q$)} procedure of~\cite{Brodal96}. Either way the above analysis indicates the total cost will be $O(k \log(n/k))$ worst-case time.
\end{proof}

\section{Optimal lazy search trees}
\label{sec:lst}

Lazy search trees~\cite{Sandlund20} are comparison-based data structures that support the following operations on a dynamic set $S$ with $|S| = n$. (It is straightforward to extend data structures to multisets.) Lazy search trees are designed for scenarios where the number of insertions is larger than the number of queries. The element of rank $r$ is the $r$th smallest element in the set $S$. The operations of lazy search trees are referred to as a \emph{sorted dictionary}.

Lazy search trees support the following operations that change the set $S$.
\begin{itemize}
	\item \texttt{Construction($S$)} $\ce$ Construct a sorted dictionary on the unsorted set $S$.
	\item \texttt{Insert($e$)} $\ce$ Add element $e$ to $S$; 
	(this increments $n$).
	\item \texttt{Delete($e$)} $\ce$ Delete $e$ from $S$, with a pointer to $e$; (this decrements $n$).
	\item \texttt{ChangeKey($e$,\,$v$)} $\ce$ Change the key of the element $e$ (with pointer to it) to $v$.
	\item \twopart{\texttt{Split($r$)} $\ce$ }{Split $S$ at rank $r$, returning two sorted dictionaries $T_1$ and $T_2$ of $r$ and $n-r$ elements, respectively, such that for all $x \in T_1$, $y \in T_2$, $x \leq y$.}
	\item \twopart{\texttt{Merge($T_1$,\,$T_2$)} $\ce$ }{Merge sorted dictionaries $T_1$ and $T_2$ and return the result, given that for all $x \in T_1$, $y \in T_2$, $x \leq y$.}
\end{itemize}

Lazy search trees support rank-based queries to get information about the set. Informally, a rank-based query is a query computable in $O(\log n)$ time on a (possibly augmented) binary search tree and in $O(n)$ time on an unsorted array. The permitted queries include: 

\begin{itemize}
    \item \texttt{Rank($k$)} $\ce$ Return the rank of key $k$ in the set $S$.
    \item \texttt{Select($r$)} $\ce$ Return the element of rank $r$ in $S$. 
    \item \texttt{Contains($k$)} $\ce$ Return \texttt{true} if there exists an element $e \in S$ with key $k$; otherwise return \texttt{false}. 
    \item \texttt{Successor($k$)} $\ce$ Return the successor of the element $e$ in set $S$.
    \item \texttt{Predecessor($k$)} $\ce$ Return the predecessor of the element $e$ in set $S$.
    \item \texttt{Minimum()} $\ce$ Return the minimum element of set $S$
    \item \texttt{Maximum()} $\ce$ Return the maximum element of set $S$.
\end{itemize}

For a formal description of permissible queries, see the original lazy search trees paper, Section 4~\cite{Sandlund20}.

Lazy search trees seek to improve the $O(\log n)$ time per-operation complexity given by binary search trees as a sorted dictionary by employing a fine-grained complexity analysis, not unlike that done in dynamic optimality literature~\cite{Sleator85,Wilber89,Cole2000a,Cole2000b,Iacono01,Demaine07,Demaine09}.

Instead of sorting elements upon insertion, sorting is delayed until query operations. Elements are stored in a partition into gaps $\Delta_1, \ldots, \Delta_m$ such that for $x \in \Delta_i$ and $y \in \Delta_{i+1}$, $x \leq y$. Inserted elements are placed into a gap respecting the key-order partition. Upon query, the gap $\Delta_i$ containing rank $r$ is split into two gaps $\Delta'_i$ and $\Delta'_{i+1}$ such that $|\Delta'_i| + \sum_{j=1}^{i-1} |\Delta_j| = r$ and for $x \in \Delta'_i$, $y \in \Delta'_{i+1}$, $x \leq y$.

\sloppy
The results of~\cite{Sandlund20} are \texttt{Construction($S$)} in $O(n)$ time where $|S| = n$, \texttt{Insert()} into a gap $\Delta_i$ in $O(\min(\log(n/|\Delta_i|) + \log \log |\Delta_i|,\: \log q))$ worst-case time ($q$ is the number of queries), \texttt{RankBasedQuery()} that splits a gap $\Delta_i$ into a smaller gap of size $k$ and a larger gap of size $|\Delta_i|-k$ in $O(k \log(|\Delta_i|/k) + \log n)$ amortized time, \texttt{Delete()} in $O(\log n)$ worst-case time, \texttt{Split($r$)} in time as in \texttt{RankBasedQuery($r$)}, and \texttt{Merge()} in $O(\log n)$ worst-case time.

The performance of \texttt{ChangeKey($e$,\,$v$)} is not as easily stated. In general it can be supported in $O(\log n)$ worst-case time as in deletion and reinsertion. More efficient runtimes are possible dependent on the rank of the element $e$ and the nature of the gap $\Delta_i$ to which $e$ belongs (and remains after the key-change).

Each gap is either \emph{zero-sided}, \emph{left-sided}, \emph{right-sided}, or \emph{two-sided}. If no queries have occurred, the single gap $\Delta_1$ is zero-sided; if queries have occurred only on the left boundary of $\Delta_i$ it is left-sided, only on the right boundary it is right-sided, and otherwise (the typical case) it is two-sided. In this way, all but the leftmost and rightmost gaps are two-sided\footnote{The distinction between zero-sided, left-sided, and right-sided gaps is motivated by the applications of lazy search trees; namely, to provide a single data structure that can be used both as an efficient priority queue and binary search tree replacement. It is not a restriction based on the inner workings of the data structure, and indeed many variations of the data structure are possible.}. A zero-sided gap $\Delta_i$ supports any key-change operation within $\Delta_i$ in $O(1)$ time. A left-sided gap $\Delta_i$ supports decrease-key within $\Delta_i$ in $O(\min(\log q, \log \log |\Delta_i|))$ time (recall $q$ is the number of queries). A right-sided gap $\Delta_i$ supports increase-key within $\Delta_i$ in the same time complexity. A two-sided gap $\Delta_i$ supports decrease-key on elements less than or equal to the median of $\Delta_i$ and increase-key on elements larger than or equal to the median of $\Delta_i$, again in the same time complexity.

\fussy
As previously stated, if we take $B = \sum_{i=1}^m |\Delta_i| \log_2(n/|\Delta_i|)$, lazy search trees serve an operation sequence of $n$ insertions and $q$ distinct queries in $O(B + \min(n \log \log n, n \log q))$ time, where the $\Delta_i$ referred to in the bound is the resulting $\Delta_i$ after completion of the operation sequence. Further, $\Omega(B+n)$ is a lower bound~\cite{Sandlund20}. By using selectable heaps in place of the interval data structure in~\cite{Sandlund20}, we can achieve three new results for lazy search trees.

\begin{theorem}
	\label{thm:lst}
	\begin{enumerate}
		\item Lazy search trees can support insertion into gap $\Delta_i$ in $O(\log (n/|\Delta_i|))$ worst-case time and change-key in $O(1)$ worst-case instead of $O(\min(\log q, \log \log |\Delta_i|))$ worst-case time in the conditions stated above, while matching previous time bounds for all other operations. Taking $B = \sum_{i=1}^m |\Delta_i| \log_2(n/|\Delta_i|)$, lazy search trees serve a sequence of $n$ insertions and $q$ distinct queries in $O(B+n)$ time, which is optimal.
		\item Split of a two-sided gap $\Delta_i$ can be done in worst-case instead of amortized time. Split of a left-sided or right-sided gap $\Delta_i$ can be done in worst-case instead of amortized time if the larger resulting gap is left-sided or right-sided, respectively.
		\item Merge of two left-sided gaps or two right-sided gaps can be performed in $O(1)$ amortized or worst-case time. Alternatively, change-key can be made $O(\log n)$ time, the merge of any two gaps can be supported in $O(1)$ time, and all operations of the lazy search tree can be made worst-case.
	\end{enumerate}
	
\end{theorem}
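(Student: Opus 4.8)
The plan is to keep the global architecture of lazy search trees from~\cite{Sandlund20} — a biased search tree over the gaps $\Delta_1,\ldots,\Delta_m$ in which $\Delta_i$ carries weight $|\Delta_i|$, so that the gap containing a given rank or key is located in $O(\log(n/|\Delta_i|))$ worst-case time and a gap can be split or have its weight updated in $O(\log n)$ time — and to replace only the per-gap \emph{interval data structure} with selectable heaps. I would store a gap according to its type: a zero-sided gap as an unstructured doubly linked list (any key-change is then trivially $O(1)$), or as a selectable heap when the structure is used purely as a priority queue; a left-sided gap as a selectable min-heap and a right-sided gap as a selectable max-heap; and a two-sided gap as a pivot $p$ (the median at the time the gap became two-sided) together with a selectable min-heap holding the elements $\le p$ and a selectable max-heap holding the elements $> p$, maintaining that neither part is much larger than the other. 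Instantiating the heaps with Fibonacci heaps gives amortized bounds and with Brodal queues worst-case bounds, using Theorems~\ref{thm:fibselect} and~\ref{thm:brodselect} for the \texttt{SelectK}/\texttt{ExtractK}/\texttt{Delete($e_1,\ldots,e_k$)} calls and $O(1)$ heap \texttt{Insert}, \texttt{Merge}, and \texttt{DecreaseKey}.

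For part~1, insertion into $\Delta_i$ locates the gap in $O(\log(n/|\Delta_i|))$, compares against the pivot when the gap is two-sided ($O(1)$), and performs a heap \texttt{Insert} ($O(1)$); the $O(\log\log|\Delta_i|)$ term of~\cite{Sandlund20}, which came from that paper's interval structure, disappears. A query falling in $\Delta_i$ and splitting it at position $k$ from a boundary is served by calling \texttt{ExtractK} on the min-heap (left-boundary query) or max-heap (right-boundary query) for whichever of the two pieces is smaller, costing $O(k'\log(|\Delta_i|/k'))$ with $k'=\min(k,|\Delta_i|-k)$, plus $O(\log n)$ for the biased-tree split; the extracted piece of size $k'$ is rebuilt into a gap (finding a median if it is two-sided) in $O(k')$ time, and when an extraction depletes or badly skews the two heaps of a two-sided gap I rebuild it around a fresh median-pivot at cost $O(|\Delta_i|)$, which a potential argument — a two-sided gap loses a constant fraction of one part between rebuilds — charges off as $O(1)$ amortized per element removed. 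Change-key within a gap is as in~\cite{Sandlund20}: a list update on a zero-sided gap, a \texttt{DecreaseKey} (or \texttt{Delete}+\texttt{Insert}) on a one-sided gap, and on a two-sided gap a \texttt{DecreaseKey} on the min-heap if the element is $\le p$ or on the max-heap if it is $> p$ — all $O(1)$. Feeding these per-operation costs into the accounting of~\cite{Sandlund20} (the total insertion cost $\sum O(\log(n/|\Delta_i|))$ and the total query cost telescope against $B$ and the $O(n)$ term, and the old $O(n\log\log n)$ term vanishes with the $\log\log$ in insertion) yields the claimed $O(B+n)$ bound, which is optimal by the $\Omega(B+n)$ lower bound of~\cite{Sandlund20}.

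For part~2, the only amortization above is the median-rebuild of a two-sided gap, which I de-amortize by lazy rebuilding: while the current heaps keep serving the gap, the work of \texttt{ExtractMin}-ing and \texttt{Insert}-ing the elements into a shadow pair of heaps built around the new median is spread over the next $\Theta(|\Delta_i|)$ operations on the gap, after which we switch over; with Brodal queues each incremental step is worst-case $O(1)$. A split of a two-sided gap at $k$ then costs worst-case $O(k'\log(|\Delta_i|/k')+\log n)$, since the large surviving piece is one heap plus the (still correctly pivoted) remainder of the other and the small extracted piece of size $k'$ can be re-pivoted from scratch in $O(k')$. Splitting a one-sided gap instead produces a two-sided piece at the new query boundary and a one-sided piece; re-pivoting the two-sided piece from scratch stays within $O(k'\log(|\Delta_i|/k'))$ only when it is the small piece, i.e.\ when the larger resulting gap is the one-sided one — exactly the stated restriction, since a fully general worst-case split would require re-pivoting an arbitrarily large two-sided piece in sublinear time.

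For part~3, because selectable heaps \texttt{Merge} in $O(1)$, merging two adjacent left-sided gaps (both min-heaps) or two adjacent right-sided gaps (both max-heaps) is a single heap \texttt{Merge} — $O(1)$ amortized with Fibonacci heaps, worst-case with Brodal queues — which in particular makes \texttt{Merge} $O(1)$ for the priority-queue use of the structure (a single gap) and for similar settings. For an $O(1)$-time merge of \emph{any} two gaps I would give up fast in-gap change-key and represent a two-sided gap instead as a selectable min-heap together with a selectable max-heap over the \emph{same} element set, with each element's two copies linked; insertion inserts into both in $O(1)$, a gap-merge is a constant number of heap \texttt{Merge}s, a left- (resp.\ right-) boundary query \texttt{ExtractK}s the $k$ smallest (resp.\ largest) from one heap and \texttt{Delete}s those $k$ elements from the other — $O(k\log(n/k))$ by Theorems~\ref{thm:fibselect} and~\ref{thm:brodselect} — and change-key becomes an $O(\log n)$-time deletion and reinsertion through the global structure; with Brodal queues throughout, every operation is worst-case. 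The main obstacle in all three parts is the treatment of two-sided gaps: keeping the left and right selectable heaps balanced around a pivot under extractions, de-amortizing the median-rebuild without exceeding the $O(k'\log(|\Delta_i|/k'))$ split bound, and ensuring each key-change still routes to the correct-sided heap; this asymmetry is precisely what forces the restrictions in parts~2 and~3.
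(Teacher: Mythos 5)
Your high-level plan --- keep the biased search tree over gaps, replace the interval data structure with selectable heaps, use a list for zero-sided gaps, a selectable min-heap for left-sided, a selectable max-heap for right-sided, use an Overmars-style incremental rebuild for de-amortization, and use the dual min-/max-heap representation for part~3 --- is the paper's plan, and your diagnosis of the one-sided-split restriction in part~2 is also correct. The genuine gap is the two-sided gap. You bipartition around a single pivot $p$, whereas the paper partitions into \emph{thirds}: a min-heap on the smallest roughly third, an \emph{unsorted middle buffer}, and a max-heap on the largest roughly third, each kept at $\ge (1/3-\epsilon)$ of the gap. That buffer is load-bearing. The change-key contract requires $O(1)$ decrease-key on every element $\le$ the median and $O(1)$ increase-key on every element $\ge$ the median. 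With a lazily-maintained pivot you cannot keep $p$ exactly at the median, so between rebuilds there can be elements $e$ with $p < e \le \text{median}$ sitting in the max-heap that owe a decrease-key you cannot serve: moving $e$ across the pivot costs a heap deletion, which is $\Theta(\log n)$, not $O(1)$. In the tripartition the median always lies in the middle list, where removal-plus-reinsertion (hence both flavors of change-key) is $O(1)$, and the $1/3-\epsilon$ invariant guarantees no element $\le$ median is ever in the max-heap or vice versa. Your single-pivot scheme has no such buffer, so the $O(1)$ worst-case change-key of part~1 fails near the pivot. (Relatedly, your shadow rebuild enumerates the old heaps by \texttt{ExtractMin}, which is $\Theta(\log |\Delta_i|)$ per element and would exceed the $O(1)$-per-operation rebuild budget; traverse the heap structure in $O(|\Delta_i|)$ total instead.)

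A second, smaller gap: you assume the in-gap split offset $k$ is known and call \texttt{ExtractK($k'$)} once. That is fine for \texttt{Split($r$)}, but the $O(B+n)$ bound of part~1 must cover all rank-based queries, including \texttt{Successor}, \texttt{Predecessor}, and \texttt{Contains}, where the split rank is not known in advance. The paper therefore queries by geometric doubling: select the $2^j$ smallest and the $2^{j+1}$ smallest, use the rank-based-query predicate (Definition~6 of~\cite{Sandlund20}) to test whether the answer lies in the first block, and double $j$ otherwise; the total work is $\sum_{j'} 2^{j'} \log(|\Delta_i|/2^{j'}) = O(k\log(|\Delta_i|/k))$ for the final $k$. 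You need this step for the query analysis to go through.
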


\begin{proof}
    We create a new interval data structure\footnote{The term ``interval data structure" was used in~\cite{Sandlund20} due to the representation of elements within a gap into a second-level key-order partition into intervals, analogous to the gaps on the first level. The ``interval data structure" discussed herein is based on selectable heaps, so at this point the name is a misnomer, but we maintain the nomenclature for consistency.} based on selectable heaps. If the gap is zero-sided, the interval data structure is an unsorted array or linked list. If the gap is left-sided, the interval data structure is a selectable min-heap. If the gap is right-sided, the interval data structure is a selectable max-heap. Otherwise, the gap is two-sided, and we partition the elements roughly into thirds. The smallest third of the elements we make into a min-heap, the largest into a max-heap, and the middle third we keep unsorted. We maintain that each third contains at least a $1/3-\epsilon$ for $0 < \epsilon < 1/6$ fraction of the total elements in the gap. We can maintain static separator elements between the thirds to ensure a valid partition.
	
	As in~\cite{Sandlund20}, insertion first locates the gap $\Delta_i$ in which the inserted element belongs in $O(\log(n/|\Delta_i|))$ worst-case time via a biased search tree~\cite{Bent85}. Insertion within the interval data structure then considers which third to place the element, if applicable, then does so in $O(1)$ worst-case time if the selectable heap is Fibonacci~\cite{Fredman87} or Brodal~\cite{Brodal96} (or if it is an unsorted array). Change-key is supported as decrease-key in a left-sided heap or increase-key in a right-sided heap, in $O(1)$ time. For a two-sided gap, any element in the middle third can have its key increased or decreased via removal and re-insertion into the proper third in $O(1)$ time, otherwise the min-heap supports decrease-key and the max-heap increase-key, ultimately satisfying the necessary constraints for efficient change-key, in $O(1)$ time. If a Brodal queue is used, the complexity is worst-case.
	
	Query works as follows. As in~\cite{Sandlund20}, first the gap $\Delta_i$ in which $r \in \Delta_i$ is found in $O(\log n)$ time. If $\Delta_i$ is zero-sided we amortize the work against the total number of elements in zero-sided gaps, as in~\cite{Sandlund20}, performing the operation in $O(1)$ amortized time. If $\Delta_i$ is two-sided, then it should be possible to determine which third the query rank $r$ falls into, if applicable, in $O(1)$ time. If in the middle third, we answer the query in $O(|\Delta_i|)$ time, rebuilding $\Delta_i$ into $\Delta'_i$ and $\Delta'_{i+1}$ as previously described. Otherwise, we must perform selection in a min- or max-heap; without loss of generality, assume it is a min-heap. We repeat the following. We select the smallest $2^j$ elements into a set $X$ and the smallest $2^{j+1}$ elements into a set $Y$, starting at $j = 0$. By Definition 6 in the original paper~\cite{Sandlund20}, we can determine which of $X$ or $Y$ contains $r$. If it is $Y$, we continue with $j \leftarrow j+1$; otherwise, we stop and answer the query. We break $\Delta_i$ into $\Delta'_i$ and $\Delta'_{i+1}$ so that $|\Delta'_i| + \sum_{j=1}^{i-1} |\Delta_i| = r$, by extracting $k = r - \sum_{j=1}^{i-1} |\Delta_i|$ elements from the heap. We make $\Delta'_i$ into a new two-sided gap in $O(k)$ time. The existing structure of gap $\Delta_i$ becomes $\Delta'_{i+1}$.

	\sloppy
	The time complexity of query can be proven as follows. The time taken for the selections, by Theorems~\ref{thm:fibselect} and~\ref{thm:brodselect}, is proportional to no more than $\sum_{i = 0}^\infty k/2^i \log (n2^i/k) = O(k \log (n/k))$. Extraction similarly takes $O(k \log (n/k))$ time. If a Brodal queue is used, the time bound is worst-case. However, $k$ may be larger than $|\Delta_i|/2$ if $\Delta_i$ was a left-sided or right-sided gap. In this case we can amortized against the total number of elements in one-sided gaps, as is done in~\cite{Sandlund20}, to perform the operation in $O(1)$ amortized time.
	
	\fussy
	Operation \texttt{Construction(S)} can be completed via insert. Operation \texttt{Delete($e$)} can be performed as priority queue deletion, in $O(\log n)$ time. Operation \texttt{Split($r$)} is performed as query and then an operation on the biased search tree gap data structure, as in~\cite{Sandlund20}. Finally, \texttt{Merge($T_1$,\,$T_2$)} similarly occurs at the gap level.

	One detail remains, which is the maintenance of the partition into thirds in a two-sided gap into fractions of size at least $1/3-\epsilon$ of the total gap size. For every $k$ elements inserted, removed, or key-changed in an operation, we perform $O(k)$ work towards building a new copy of the interval data structure with a more-accurate partition, as described in~\cite{overmars1983design}. After the new version is constructed it is caught up to the operations performed during construction at twice the pace they occur. When it is caught up the current data structure is thrown away and construction on a new data structure begins. This allows worst-case time maintenance of the partition while keeping all operation complexity the same.
	
	The above shows parts 1 and 2 of Theorem~\ref{thm:lst}. We now consider part 3. Brodal and Fibonacci heaps support $O(1)$ time merge, in worst-case and amortized time, respectively. As left-sided and right-sided gaps are just selectable heaps, we can simply merge the heaps. For the alternative approach, we forget about left-sided, right-sided, or two-sided gaps and store all elements of a gap in both a min- and max-heap, with each element containing a pointer to the other in the opposite heap. All operations work as above, except now for query, when we extract elements from the min- or max-heap, we also delete the elements in the other heap. As stated in Theorems~\ref{thm:fibselect} and~\ref{thm:brodselect}, deletion of the $k$ elements can be completed in $O(k \log(n/k))$ amortized or worst-case time, respectively. Now, since any two gaps has both a min- and a max-heap, we can support the merge of two arbitrary gaps in $O(1)$ time by simply merging the two min- and max-heaps. Change-key is supported only by deletion and re-insertion, in $O(\log n)$ time. All operations can be made worst-case via use of a Brodal queue.
\end{proof}

Theorem~\ref{thm:lst} answers open problems 2, 4, and 5 from~\cite{Sandlund20}, though we give worst-case performance only in the general case of two-sided gaps. It appears this may be necessary if change-key is to be supported as a decrease-key operation for all elements in a left-sided gap (analogously, increase-key in a right-sided gap), which is what provides optimal performance as a priority queue. Specifically, the performance of change-key and the ability to perform quick splitting of a gap are coupled. If a lazy search tree is used as a min-heap, allowing decrease-key of all elements, then we cannot find the maximum element in $O(\log n)$ worst-case time. We need to rebuild the data structure so that ranks close to the maximum and minimum can be found efficiently, which necessarily requires the change-key operation to be decrease-key on elements with rank closer to the minimum and increase-key on elements with rank closer to the maximum.

Finally, we observe that the alternative insertion complexity of $O(\log q)$ is also achieved in the version of lazy search trees stated herein. The number of elements in the gap data structure is bounded by $q$~\cite{Sandlund20}, and insertion into the interval data structure based on selectable heaps takes $O(1)$ time. However, $O(\log q)$ vs. $O(\log (n/|\Delta_i|))$ time insertion does not impact overall time complexity on any sequence of operations, so we leave it out of the statement of Theorem~\ref{thm:lst}.

\section{Conclusion}
\label{sec:conclude}

In this paper we have shown that the $O(\log n)$ time extract-minimum function of efficient priority queues can be generalized to the extraction of the $k$ smallest elements in $O(k \log(n/k))$ time. We further show selection of the $k$ smallest elements without extraction and deletion of any $k$ elements can be performed also in $O(k \log(n/k))$ time.

We apply selectable heaps to lazy search trees~\cite{Sandlund20}, giving an optimal data structure in the gap model, adding a merge function when used as a priority queue, and providing worst-case runtimes in the general case of two-sided gaps. Any further theoretical improvement in lazy search trees would require abstraction to an even more fine-grained model.

We believe by using selectable heaps, we can achieve a more straightforward approach to lazy search trees than that of~\cite{Sandlund20}. With an understanding of Fibonacci heaps~\cite{Fredman87} or Brodal queues~\cite{Brodal96}, the technical arguments required herein are slightly less involved. However, the approach of~\cite{Sandlund20} based on first principles has its own merit. In~\cite{Sandlund20}, the data structure satisfies an $O(\min(n, q))$ pointer bound, where $n$ is the number of elements and $q$ the number of queries. Further, the simple priority queue developed which supports extraction natively is surely more practical. In~\cite{Sandlund20}, it is shown in the experiments that a rudimentary implementation of the described data structure when the number of insertions approaches $n\geq 1\,000\,000$ can outperform binary search trees in the following two ways: when the number of queries $q$ is significantly smaller than the number of insertions $n$ (like $q \leq \sqrt{n}$), or when the queries are highly non-uniform, like large range queries or a priority queue operation sequence. In contrast, considering the complicated structures of Fibonacci heaps~\cite{Fredman87} or Brodal queues~\cite{Brodal96}, the lazy search tree improvement discussed herein are likely to mostly be of theoretical interest.

For future work, it would be interesting to see if selectable heaps have further applications outside of the straightforward transitive applications through lazy search trees, such as an optimal online multiple selection algorithm~\cite{Dobkin81}. It would also be interesting to see if selection can be supported on a priority queue with optimal worst-case guarantees in the pointer machine model, such as strict Fibonacci heaps~\cite{Brodal12}. Such a data structure would allow a lazy search tree with worst-case guarantees on the pointer machine, whereas the version discussed here requires internal use of arrays. Finally, it may be possible to support \texttt{SelectK($k$)} in $O(k)$ time, as do lazy search tree priority queues, but while retaining optimal time bounds for the remaining operations.

\bibliographystyle{alpha}
\bibliography{selectable-heaps}

\appendix

\section{Brodal queue overview}
\label{sec:boverview}

Brodal queues~\cite{Brodal96} are much more complicated than Fibonacci heaps, but allow all operations in worst-case time. Nodes are stored in a tree $T_1$ or possibly another tree $T_2$ which is incrementally merged into $T_1$. Each node has a non-negative integer rank assigned to it. For each node $x$, we use $p(x)$ as the parent of $x$, $r(x)$ as the rank\footnote{Rank is used here as an assigned parameter. The term rank is used for different concepts in Section~\ref{sec:intro}.} of $x$, and $n_i(x)$ as the number of children of $x$ of rank $i$. Finally, we use $t_i$ as the root of $T_i$. Nodes which satisfy heap order, i.e. they are larger than their parents, are called good nodes. Nodes which are not good are called \textit{violating} nodes.

Brodal queues use 13 invariants to maintain the structure. The invariants can be classified into three sets S, O and R respectively. The invariants in the set S apply to all nodes in the Brodal queue, while the invariants in the set R apply to the root of the trees. The invariants in set O control the violations.

For any node $x$, the following invariants are satisfied.
\begin{itemize}[label={}]
    \item S1 : If $x$ is a leaf, then $r(x) = 0$,
    \item S2 : $r(x) < r(p(x))$,
    \item S3 : if $r(x) > 0$, then $n_{r(x)-1}(x) \geq 2$,
    \item S4 : $n_i(x) \in \{0, 2, 3, \dots, 7\}$,
    \item S5 : $T_2 = \emptyset $ or $r(t_1) \leq r(t_2)$.
\end{itemize}

 Each node $x$ has rank $r(x)$ and at most $7$ children of any rank less than $r(x)$ (Invariants S2 and S4). Additionally, $x$ cannot have a single child of any rank (S4) and must have at least two children of rank $r(x)-1$ (S3). Leaves have rank $0$ (S1).

Beside the good nodes, Brodal queues also allow violating nodes. To keep track
of the violating nodes, each node $x$ is associated with two subsets $V(x)$ and $W(x)$. If a node $y$ is smaller than its parent $p(y)$, the violation is stored in a violating set $V(x)$ or $W(x)$ for some node $x \leq y$. Despite the fact that each node $x$ might need to maintain non-empty subsets $V(x)$ and $W(x)$, elements will only be inserted to either $V(t_1)$ or $W(t_1)$ ($t_1$ is the root of $T_1$).
The $V$ sets take care of large violations, i.e. violations that have rank larger than $r(t_1)$ when they are created will be added to $V(t_1)$. The $W$ sets handle smaller violations, i.e. violations that have rank less than or equal to $r(t_1)$. We use $w_i(x)$ to denote the number of nodes of rank $i$ in the set $W(x)$.
Brodal uses the constant $\alpha$ for the number of large violations that can be created between two increases in the rank of $t_1$.
In the original paper~\cite{Brodal96}, Brodal uses the following set of invariants to maintain the violations. 

\begin{itemize}[label={}]
    \item O1 : $t_1$ = min $T_1 \cup T_2$,
    \item O2 : if $y \in V(x) \cup W(x)$, then $y \geq x$,
    \item O3 : if $y < p(y)$, then an $x \neq y$ exists such that $y \in V (x) \cup W(x)$,
    \item O4 : $w_i(x) \leq 6$,
    \item O5 : if $V (x) = (y_{|V (x)|} , \dots , y_2, y_1)$, then $r(y_i) \geq \left \lfloor{(i-1)/\alpha}\right \rfloor $  for $i = 1, 2, \dots , |V (x)|$, where $\alpha$ is a constant.
\end{itemize}

From the above invariants, we can obtain the upper bound of the size of $V$ and $W$ sets. The set $W(x)$ contains at most $6$ violations of any rank (O4) and the set $V(x)$ contains at most $m\alpha$ violations of rank less than $m$, where $\alpha$ is a constant (O5).

At the roots $t_1$ of $T_1$ and $t_2$ of $T_2$, invariants are stronger. Each root $t_j$ has at least two children of every rank (R1). 

\begin{itemize}[label={}]
    \item R1 : $n_i(t_j ) \in \{2, 3, \dots , 7\}$ for $i = 0, 1, \dots , r(t_j)-1$,
    \item R2 : $|V (t_1)| \leq \alpha r(t_1)$,
    \item R3 : if $y \in W(t_1)$, then $r(y) < r(t_1)$.
\end{itemize}
   
   A \emph{guide} data structure ~\cite{Kaplan99,Elmasry07} is used to efficiently manage the invariants R1 and O4. Given a sequence of integer 
    variables $x_k, x_{k-1}, \dots , x_1$ and $x_i \leq T$ for some threshold $T$, we can only perform \texttt{Reduce(i)} operations on the sequence which decrease $x_i$ by at least two and increase $x_{i+1}$ by at most one. The $x_i$s can be forced to increase and decrease by one, but for each change in an $x_i$ we are allowed to do $O(1)$ \texttt{Reduce} operations to prevent any
    $x_i$ from exceeding $T$. The guide data structure will tell us which
    operations to perform in $O(1)$ time.

    Notice that, if we have at least three nodes of the same rank $i$, we can combine those nodes and create a new node with rank $i+1$. Similarly, we can also split a node of rank $i+1$ to several nodes of rank at most $i$. We call the first operation \textit{linking} and the latter \textit{delinking}.
    For each node, if we count the number of children of each rank and list the number from highest rank to lowest, we obtain a sequence of integer 
    variables $x_k, x_{k-1}, \dots , x_1$. 
    We can perform linking and delinking operation for the children of the same node, and we can reflect the changes of the rank in the above sequence by performing \texttt{Reduce(i)} operations. 
    We can use two instances of the guide data structure for each node to maintain both a lower and upper bound on the number of children of each rank so that addition and removal of a child of any rank can be supported at the roots $t_1$ and $t_2$ in $O(1)$ worst-case time.
    A guide data structure is also used to maintain the upper bound given in Invariant O4 on $W(t_1)$. Two violations of the same rank $r$ can be reduced to at most one of rank $r+1$ in worst-case $O(1)$ time.

\end{document}